\theoremstyle{plain}
\newtheorem{theo}{Theorem}
\newtheorem{cor}[theo]{Corollary}
\newtheorem{rem}[theo]{Remark}
\newtheorem{defi}[theo]{Definition}
\newtheorem{lemma}[theo]{Lemma}
\newtheorem{prop}[theo]{Proposition}
\newtheorem{op}[theo]{Open Problem}
\newtheoremstyle{TheoremNum}
{\topsep}{\topsep}              
{\itshape}                      
{}                              
{\bfseries}                     
{.}                             
{ }                             
{\thmname{#1}\thmnote{ \bfseries #3}}
\theoremstyle{TheoremNum}
\newtheorem{theoremn}{Theorem}
\def\cM{{\mathcal{M}}}
\newcommand{\B}{\mathcal{B}}
\newcommand{\F}{\mathbb{F}}
\DeclareMathOperator{\ima}{im}
\DeclareMathOperator{\ind}{ind}
\renewcommand*{\backref}[1]{}
\renewcommand*{\backrefalt}[4]{%
	\ifcase #1 (Not cited.)%
	\or        p.~#2.%
	\else      pp.~#2.%
	\fi}
\begin{document}
\title{When does a bent concatenation not belong to the completed Maiorana-McFarland class?}


%

 \author{%
   \IEEEauthorblockN{Sadmir Kudin\IEEEauthorrefmark{1},
                     Enes Pasalic\IEEEauthorrefmark{2},
                     Alexandr Polujan\IEEEauthorrefmark{3},
                     and Fengrong Zhang\IEEEauthorrefmark{4}}
   \IEEEauthorblockA{\IEEEauthorrefmark{1}%
                    University of Primorska, FAMNIT \& IAM, Glagoljaška 8, 6000 Koper, Slovenia,  sadmir.kudin@iam.upr.si}
    \IEEEauthorblockA{\IEEEauthorrefmark{2}%
   	University of Primorska, FAMNIT \& IAM, Glagoljaška 8, 6000 Koper, Slovenia, enes.pasalic6@gmail.com}                
   \IEEEauthorblockA{\IEEEauthorrefmark{3}%
                      Otto von Guericke University Magdeburg, Universit\"atsplatz 2, 39106 Magdeburg, Germany, alexandr.polujan@gmail.com}
   \IEEEauthorblockA{\IEEEauthorrefmark{4}%
                     School of Cyber Engineering, Xidian University, Xi'an 710071, P.R. China, zhfl203@163.com}
 }

\maketitle
\begin{abstract}
	Every Boolean bent function $f$  can be written either as a concatenation $f=f_1||f_2$ of two complementary semi-bent functions $f_1,f_2$;  or as a concatenation $f=f_1||f_2||f_3||f_4$ of four Boolean functions $f_1,f_2,f_3,f_4$, all of which are simultaneously bent, semi-bent, or 5-valued spectra-functions. In this context, it is essential to ask: When does a bent concatenation $f$ (not) belong to the completed Maiorana-McFarland class $\mathcal{M}^\#$?  In this article, we answer this question completely by providing a full characterization of the structure of $\mathcal{M}$-subspaces for the concatenation of the form $f=f_1||f_2$ and $f=f_1||f_2||f_3||f_4$, which allows us to specify the necessary and sufficient conditions so that $f$ is outside $\mathcal{M}^\#$. Based on these conditions, we propose several explicit design methods of specifying bent functions outside $\mathcal{M}^\#$ in the special case when $f=g||h||g||(h+1)$, where $g$ and $h$ are bent functions.
\end{abstract}

\section{Preliminaries}\label{sec:pre}
Let $\mathbb{F}_2^n$ be the vector space of all $n$-tuples $x=(x_1,\ldots,x_n)$, where $x_i \in \mathbb{F}_2$. For $x=(x_1,\ldots,x_n)$ and $y=(y_1,\ldots,y_n)$ in $\mathbb{F}^n_2$, the usual scalar product over $\mathbb{F}_2$ is defined as $x\cdot y=x_1 y_1+\cdots+ x_n y_n.$ By $0_n$ we denote the all-zero vector of $\mathbb{F}^n_2.$ Every Boolean function $f\colon\mathbb{F}^n_2 \rightarrow \mathbb{F}_2$ can be uniquely represented by its associated algebraic normal form (ANF) in the form $f(x_1,\ldots,x_n)=\sum_{u\in \mathbb{F}^n_2}{\lambda_u}{(\prod_{i=1}^n{x_i}^{u_i})}$,  where $x_i, \lambda_u \in \mathbb{F}_2$ and $u=(u_1, \ldots,u_n)\in \mathbb{F}^n_2$.
	The algebraic degree of $f$, denoted by $\deg(f)$, is equal to the maximum Hamming weight of $u \in \F_2^n$ for which $\lambda_u \neq 0$.
	
	The \textit{first-order derivative} of a function $f$ in the direction $a \in \F_2^n$ is given by $D_{a}f(x)=f(x)+ f(x+ a)$. Derivatives of higher orders are defined recursively, i.e., the \textit{$k$-th order derivative} of a function $f\in\mathcal{B}_n$ is defined by $D_Vf(x)=D_{a_k}D_{a_{k-1}}\ldots D_{a_1}f(x)=D_{a_k}(D_{a_{k-1}}\ldots D_{a_1}f)(x)$, where $V=\langle a_1,\ldots,a_k \rangle$ is a vector subspace of $\F_2^n$ spanned by elements $a_1,\ldots,a_k\in\F_2^n$. Note that if $a_1,\ldots,a_k\in\F_2^n$ are linearly dependent, then $D_{a_k}D_{a_{k-1}}\ldots D_{a_1}f=0$. The \emph{Walsh-Hadamard transform} of $f\in\mathcal{B}_n$ at any point $\omega\in\mathbb{F}^n_2$ is defined $
	W_{f}(\omega)=\sum_{x\in \mathbb{F}_2^n}(-1)^{f(x)\oplus \omega\cdot x}$.  A function $f\in\mathcal{B}_n,$ for even $n$, is called \textit{bent} if $|W_f(u)|=2^{\frac{n}{2}}$, for all $u\in\F_2^n$. Its unique {\it dual} function $f^*$ is defined as $W_f(u)=2^{\frac{n}{2}}(-1)^{f^*(u)}$, which is also bent. Two Boolean functions $f,f'\in\mathcal{B}_n$ are called \textit{extended-affine equivalent}, if there exists an affine permutation $A$ of $\F_2^n$ and affine function $l\in\mathcal{B}_n$, such that $f\circ A + l= f'$. It is well known, that extended-affine (EA) equivalence preserves the bent property. 
	
	The \textit{completed Maiorana-McFarland class $\cM^\#$} \cite{MM73} is the set of $n$-variable ($n=2m$) Boolean bent functions, which are EA-equivalent to the functions of the form
\begin{equation}\label{eq:MMdefinition}
	f(x,y)=x \cdot \pi(y)+ g(y), \mbox{ for all } x, y\in\F_2^m,
\end{equation}
where $\pi$ is a permutation on $\F_2^m$, and $g$ is an arbitrary Boolean function on
$\F_2^m$.  It is well-known from Dillon's thesis~\cite{Dillon} that a bent function $f\in\mathcal{B}_n$ belongs to  $\mathcal{M}^\#$ iff there exists a vector space $V$ of dimension $m$, such that $D_a D_b f=0$ for all $a,b\in V$. This characterization motivates the following definition:

\begin{defi}\cite{Polujan2020}
	Let $f\in\mathcal{B}_n$ be a Boolean function. We call a vector subspace $V$ of $\F_2^n$  an $\mathcal{M}$-subspace of $f$, if we have that 	$ D_{a}D_{b}f=0$, for any $ a,  b \in V$. 
\end{defi} 

Further, we will investigate $\mathcal{M}$-subspaces of the Boolean functions of the form $f=f_1||f_2$ or $f=f_1||f_2||f_3||f_4$, which are defined as follows. We define the concatenation $f_1 \vert \vert f_2: \F_2^{n+1} \to \F_2$ of the two functions as:
\begin{equation}\label{fform}
	\begin{split}
		&f_1 \vert \vert f_2( z,z_{n+1}) = f_1(z)+z_{n+1} (f_1(z)+f_2(z)),\\& \text{ for all } z \in \F_2^n, \; z_{n+1}\in \F_2,
	\end{split}
\end{equation}
that is, $f_1 \vert \vert f_2( z,0)=f_1(z)$, and $f_1 \vert \vert f_2( z,1)=f_2(z)$.

For $i=1, \dots, 4$, let $f_i\in\mathcal{B}_n$. The formula for the concatenation $f=f_1 \vert \vert f_2 \vert \vert f_3 \vert \vert f_4\in\mathcal{B}_{n+2}$ of the four functions is given by:
\begin{equation}\label{fform4}
	\begin{split}
		f( z,z_{n+1},z_{n+2}) =& f_1(z)+ z_{n+1}z_{n+2} (f_1+f_2+f_3+f_4)(z)\\ +& z_{n+1} (f_1+f_2)(z)+z_{n+2} (f_1+f_3)(z),
	\end{split}
\end{equation}
for all $z \in \F_2^n$ and $z_{n+1},z_{n+2} \in \F_2$, that is, $f( z,0,0)=f_1(z)$, $f( z,1,0)=f_2(z)$, $f( z,0,1)=f_3(z)$ and $f( z,1,1)=f_4(z)$. Throughout this article, we will call bent functions of the form~\eqref{fform} and \eqref{fform4} \textit{bent concatenations}.

The main aim of this article is to develop further a theory of $\mathcal{M}$-subspaces for bent concatenations initially analyzed in \cite{Polujan2020} and recently considered in \cite{PPKZ2023}.  For a more detailed treatment of bent functions we refer to \cite{CarlMes2016,Mesnager}, and for their designs outside $\cM^\#$ to \cite{Bent_Decomp2022,PPKZ_BFA23_CCDS}.  The rest of the paper is organized in the following way.
In Sections~\ref{sec: concat 2} and~\ref{sec: concat 4}, we provide a full characterization of the structure of $\mathcal{M}$-subspaces for the concatenation of the form $f=f_1||f_2$ and $f=f_1||f_2||f_3||f_4$, respectively. Consequently, we specify the necessary and sufficient conditions so that $f$ is outside $\cM^\#$. Based on these conditions, we propose in Section~\ref{sec:design} several explicit design methods of specifying bent functions outside $\cM^\#$ in the special case when $f=g||h||g||(h+1)$.

\section{Concatenation of two Functions}\label{sec: concat 2}

Let $a,b\in\F_2^n$. From Eq.~\eqref{fform}, we deduce that the second-order derivative of the concatenation $f=f_1 \vert \vert f_2: \F_2^{n+1} \to \F_2$, with respect to $(a,0)$ and $(b,0)$ has the following form
\begin{equation}\label{eq: secder00}
	D_{(a,0)}D_{(b,0)}f=D_{(a,0)}D_{(b,0)}f_1 \vert \vert f_2= D_aD_bf_1 \vert \vert D_aD_bf_2.
\end{equation}
Similarly, from Eq.~\eqref{fform}, the second-order derivative of $f=f_1 \vert \vert f_2$ w.r.t. $(a,0)$ and $(b,1)$, at the point $(z,z_{n+1}) \in \F_2^{n+1}$, can be computed as
\begin{equation}\label{eq: secder01}
	\begin{split}
		&D_{(a,0)}D_{(b,1)}f= D_{(b,1)}(D_af_1 \vert \vert D_af_2)= g_1 \vert \vert g_2, \text{ where } \\
		&g_1(z)=D_af_1(z)+D_af_2(z+b) \text{ and }\\ 
		&g_2(z)=D_af_2(z)+D_af_1(z+b), \text{ for all } z \in \F_2^n.
	\end{split}
\end{equation}
Since $D_{(a,a_{n+1})}D_{(b,b_{n+1})}f=D_{(b,b_{n+1})}D_{(a,a_{n+1})}f=D_{(a+b,a_{n+1}+b_{n+1})}D_{(b,b_{n+1})}f$, for all $a,b \in \F_2^n$ and $a_{n+1}, b_{n+1} \in \F_2$, the rest of the cases can also be computed with \eqref{eq: secder00} and \eqref{eq: secder01}. Using these expressions, we relate $\mathcal{M}$-subspaces of $f$ to $\mathcal{M}$-subspaces of $f_1$ and $f_2$ as follows:

\begin{theo} \label{th: Msubspacesoff1f2} Let $f_1,f_2 \in \mathcal{B}_{n}$ and let $k \in \{ 1, \dots ,n \}$. The function $f=f_1 \vert \vert f_2 \in \mathcal{B}_{n+1}$ has no $(k+1)$-dimensional $\cM$-subspaces if and only if the following conditions hold:
	\begin{enumerate}[a)]
		\item The functions $f_1$ and $f_2$ do not share a common $(k+1)$-dimensional $\cM$-subspace;
		\item For every vector $u \in \F_2^n$ and every $k$-dimensional $\cM$-subspace $V \subset \F_2^n$ of both $f_1$ and $f_2$, there is $a \in V$ such that
		\begin{equation}
			\label{eq:conditoutsideM} D_af_1(z)+D_af_2(z+u) \neq 0, \text{ for some } z \in \F_2^n.
		\end{equation}
	\end{enumerate}
\end{theo}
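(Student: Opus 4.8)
The plan is to characterize the $(k+1)$-dimensional $\mathcal{M}$-subspaces of $f = f_1 \vert \vert f_2$ by analyzing how an arbitrary subspace $W \subset \F_2^{n+1}$ of dimension $k+1$ sits relative to the hyperplane $\F_2^n \times \{0\}$. The key structural observation is that any subspace $W$ either lies entirely inside $\F_2^n \times \{0\}$, or it intersects it in a subspace of codimension one. This dichotomy splits the analysis into two cases, and I would show that condition a) rules out the first case and condition b) rules out the second. Both directions (the ``if'' and the ``only if'') will follow once I translate the defining property $D_\alpha D_\beta f = 0$ for all $\alpha, \beta \in W$ into statements about $f_1$ and $f_2$ using the derivative formulas~\eqref{eq: secder00} and~\eqref{eq: secder01}.

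First I would handle the case $W \subseteq \F_2^n \times \{0\}$, i.e., $W = V \times \{0\}$ for some $(k+1)$-dimensional $V \subset \F_2^n$. By~\eqref{eq: secder00}, the condition $D_{(a,0)}D_{(b,0)}f = D_aD_bf_1 \vert \vert D_aD_bf_2 = 0$ holds for all $a,b \in V$ if and only if $D_aD_bf_1 = 0$ and $D_aD_bf_2 = 0$ for all $a,b \in V$, that is, if and only if $V$ is a common $(k+1)$-dimensional $\mathcal{M}$-subspace of both $f_1$ and $f_2$. So $f$ has a $(k+1)$-dimensional $\mathcal{M}$-subspace of this first type precisely when condition a) fails.

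Next I would treat the case where $W \not\subseteq \F_2^n \times \{0\}$. Here $W$ contains some vector with last coordinate $1$, and $V := \{a \in \F_2^n : (a,0) \in W\}$ has dimension $k$. Choosing a generator $(b,1) \in W$, the whole subspace $W$ is spanned by $V \times \{0\}$ together with $(b,1)$. For $W$ to be an $\mathcal{M}$-subspace I need $D_\alpha D_\beta f = 0$ for all pairs of generators: the pairs from $V \times \{0\}$ again force $V$ to be a common $k$-dimensional $\mathcal{M}$-subspace of $f_1$ and $f_2$ via~\eqref{eq: secder00}, while the mixed pairs $(a,0), (b,1)$ with $a \in V$ force, via~\eqref{eq: secder01}, that $D_af_1(z) + D_af_2(z+b) = 0$ identically in $z$ for all $a \in V$. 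Setting $u = b$, this is exactly the negation of the condition in b). Since $b$ is determined only modulo $V$, and any $u \in \F_2^n$ can arise as such a coset representative, quantifying over all $u$ and all common $k$-dimensional $\mathcal{M}$-subspaces $V$ captures every subspace $W$ of this second type. Thus $f$ has a $(k+1)$-dimensional $\mathcal{M}$-subspace of the second type precisely when condition b) fails.

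Combining the two cases, $f$ has \emph{some} $(k+1)$-dimensional $\mathcal{M}$-subspace if and only if a) or b) fails; contrapositively, $f$ has no such subspace if and only if both a) and b) hold, which is the claim. The main obstacle I anticipate is bookkeeping in the second case: I must verify that the pairwise vanishing of second-order derivatives on a generating set suffices for vanishing on all of $W$ (this uses bilinearity of the map $(\alpha,\beta) \mapsto D_\alpha D_\beta f$ together with $D_\alpha D_\alpha f = 0$), and I must argue carefully that replacing the generator $(b,1)$ by another lift $(b+v,1)$ with $v \in V$ does not change whether condition~\eqref{eq:conditoutsideM} is satisfied — so that the condition genuinely depends only on the coset $b + V$, making the quantification over $u \in \F_2^n$ both correct and non-redundant.
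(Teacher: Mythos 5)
Your proposal is correct and follows essentially the same route as the paper's proof: your dichotomy on how $W$ meets the hyperplane $\F_2^n \times \{0\}$ is exactly the paper's rank--nullity analysis of the projection onto the last coordinate, and both arguments then translate the two cases via Eqs.~\eqref{eq: secder00} and~\eqref{eq: secder01} into conditions a) and b). The bookkeeping points you flag (reducing to a generating set and checking independence of the coset representative of $b$ modulo $V$) are sound and are the only details the paper leaves implicit.
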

\begin{proof}
	(Sketch) Assume that $W$ is an $\cM$-subspace of $f$, with $\dim(W)=k+1$. Consider the projection $P: W \to \F_2$ given by $P(z,z_{n+1})=z_{n+1}$, for all $(z,z_{n+1}) \in W$, where $z \in \F_2^n$ and $z_{n+1} \in \F_2$. Then, $\dim (\ker(P)) \geq k$ (by rank-nullity theorem). If $\dim (\ker(P)) =k+1$, then Eq.~\eqref{eq: secder00} implies that $f_1$ and $f_2$ share a common $(k+1)$-dimensional $\cM$-subspace. Similarly, when $\dim (\ker(P)) =k$, define $V$ through $\lbrace (v,0) \colon v\in V \rbrace = \ker(P)$. Then, taking  $u \in \F_2^n$ be such that $(u,1) \in W \setminus \ker(P)$, by Eqs.~\eqref{eq: secder00} and \eqref{eq: secder01} one deduces Eq. \eqref{eq:conditoutsideM}. 
	In the other direction, it can be shown that assuming that $f_1$ and $f_2$ do not share a common $(k+1)$-dimensional $\cM$-subspace leads to a contradiction.
\end{proof}

Using the fact that a bent function $f \in \mathcal{B}_{t}$ is in the $\cM^{\#}$ class if and only if it has a $t/2$-dimensional $\cM$-subspace, from Theorem \ref{th: Msubspacesoff1f2} we deduce the following result.

\begin{cor} \label{cor:completecharact} Let $f_1,f_2 \in \mathcal{B}_{n}$, $n=2k+1$, be Boolean functions such that $f=f_1 \vert \vert f_2 \in \mathcal{B}_{n+1}$ is a bent function. Then, the function $f$ is outside the $\cM^{\#}$ class if and only if the following conditions hold:
	\begin{enumerate}
		\item The functions $f_1$ and $f_2$ do not share a common $(k+1)$-dimensional $\cM$-subspace;
		\item For every vector $u \in \F_2^n$ and every $k$-dimensional $\cM$-subspace $V \subset \F_2^n$ of both $f_1$ and $f_2$, there is $a \in V$ such that
		$D_af_1(z)+D_af_2(z+u) \neq 0, \text{ for some } z \in \F_2^n$.
	\end{enumerate}
\end{cor}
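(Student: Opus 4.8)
The plan is to obtain Corollary~\ref{cor:completecharact} as a direct specialization of Theorem~\ref{th: Msubspacesoff1f2}, using only the bridge between membership in $\cM^\#$ and the existence of a half-dimensional $\cM$-subspace that is recalled just before the statement. The whole argument is dimension bookkeeping plus one application of the theorem, so I would not expect to re-derive anything about second-order derivatives.

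First I would fix the dimension count. Here $f=f_1||f_2\in\mathcal{B}_{n+1}$ and $n=2k+1$, so $f$ is a bent function in $n+1=2(k+1)$ variables, and its half-dimension is $(n+1)/2=k+1$. By the stated characterization (a bent function in $t$ variables lies in $\cM^\#$ if and only if it admits a $t/2$-dimensional $\cM$-subspace), membership $f\in\cM^\#$ is equivalent to $f$ having a $(k+1)$-dimensional $\cM$-subspace. Taking the contrapositive, $f$ is outside $\cM^\#$ if and only if $f$ has \emph{no} $(k+1)$-dimensional $\cM$-subspace.

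Next I would invoke Theorem~\ref{th: Msubspacesoff1f2} with this very value of $k$. One should first check that $k$ is admissible, i.e. $k\in\{1,\dots,n\}$: since $n=2k+1$ we have $k=(n-1)/2$, which is a positive integer for $n\geq 3$ (the only relevant range, as $n=1$ yields $k=0$ and the degenerate case of bent functions on $\F_2^2$, all of which lie in $\cM^\#$). With $k$ in range, the theorem asserts that $f$ has no $(k+1)$-dimensional $\cM$-subspace if and only if conditions a) and b) hold. Combining this with the equivalence from the previous step gives that $f$ is outside $\cM^\#$ if and only if conditions a) and b) hold, and these are verbatim conditions~1) and~2) of the corollary.

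Since essentially all of the content is carried by Theorem~\ref{th: Msubspacesoff1f2}, I do not anticipate a genuine obstacle here; the only point requiring care is the dimension translation, namely verifying that the half-dimension for a bent function on $\F_2^{n+1}$ is exactly $k+1$ and that the parity constraint $n=2k+1$ is precisely what makes $n+1$ even, so that bentness—and hence the Dillon-type criterion—is available in the first place.
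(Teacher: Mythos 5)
Your proposal is correct and matches the paper's own derivation exactly: the corollary is obtained by combining Theorem~\ref{th: Msubspacesoff1f2} (applied with $k=(n-1)/2$) with Dillon's criterion that a bent function on $\F_2^{n+1}$ lies in $\cM^\#$ if and only if it admits an $\cM$-subspace of dimension $(n+1)/2=k+1$. The dimension bookkeeping you carry out is the only content needed, and it is accurate.
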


It is well-known that in the above concatenation $f=f_1||f_2$, the function $f$ is bent if and only if $f_1$ and $f_2$ are disjoint spectra semi-bent functions; see~\cite[Theorem 6]{Zheng2001}. In particular, when $f_i\colon\F_2^{2k+1} \rightarrow \F_2$ are represented in the form $f_i(x,y)=x \cdot \phi_i(y) + h_i(y)$, for $x \in \F_2^{k+1}$, $y \in \F_2^k$, where $\phi\colon \F_2^{k} \rightarrow \F_2^{k+1}$ and $h_i\colon\F_2^k \rightarrow \F_2$, then the properties of $\phi_i$ are essential in defining disjoint spectra semi-bent functions $f_1$ and $f_2$.

\begin{theo} \label{th:bentcond} Let  $f_1$ and $f_2$ defined as $f_i(x,y)=x \cdot \pi_i(y) + h_i(y)$, with $x \in \F_2^{k+1}$ and $y \in \F_2^{k}$ and $h_i$ are arbitrary Boolean  functions on $\F_2^{k}$. Then, the concatenation $f=f_1||f_2$ is a bent   function on $\F_2^{2k+2}$ if and only if $\ima(\pi_1) \cap \ima(\pi_2)=\varnothing$ and $\pi_i$ are  injective mappings.
	
\end{theo}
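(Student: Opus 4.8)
The plan is to reduce the bentness of $f=f_1||f_2$ to spectral conditions on $f_1$ and $f_2$ via the cited characterization, and then translate those conditions into the stated properties of $\pi_1,\pi_2$. Recall that $f$ is bent if and only if $f_1$ and $f_2$ are disjoint spectra semi-bent functions, so I would work entirely with the Walsh--Hadamard transforms of $f_1,f_2$ on $\F_2^{2k+1}$. First I would compute, for $\omega=(a,b)$ with $a\in\F_2^{k+1}$ and $b\in\F_2^{k}$,
\[ W_{f_i}(a,b)=\sum_{x,y}(-1)^{x\cdot\pi_i(y)+h_i(y)+a\cdot x+b\cdot y}=2^{k+1}\sum_{y\in\pi_i^{-1}(a)}(-1)^{h_i(y)+b\cdot y}, \]
where the factor $2^{k+1}$ arises because the inner sum over $x\in\F_2^{k+1}$ vanishes unless $\pi_i(y)=a$. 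Writing $S^{(i)}_a(b):=\sum_{y\in\pi_i^{-1}(a)}(-1)^{h_i(y)+b\cdot y}$, the function $f_i$ is semi-bent precisely when $S^{(i)}_a(b)\in\{0,\pm1\}$ for all $a,b$, and its Walsh support lives over those $a$ for which the fibre $\pi_i^{-1}(a)$ is nonempty.

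The key step is to show that $f_i$ is semi-bent if and only if $\pi_i$ is injective. For the easy direction, if $\pi_i$ is injective then each fibre $\pi_i^{-1}(a)$ is empty or a singleton, so $S^{(i)}_a(b)\in\{0,\pm1\}$ and $f_i$ is semi-bent. For the converse I would use a Parseval-type identity over the fibre: summing $|S^{(i)}_a(b)|^2$ over $b\in\F_2^{k}$ and applying the orthogonality relation that $\sum_{b}(-1)^{b\cdot(y+y')}$ equals $2^{k}$ if $y=y'$ and $0$ otherwise, I obtain
\[ \sum_{b\in\F_2^{k}}\bigl|S^{(i)}_a(b)\bigr|^2=2^{k}\,\bigl|\pi_i^{-1}(a)\bigr|. \]
If $f_i$ is semi-bent then $|S^{(i)}_a(b)|\le1$, so the left-hand side is at most $2^{k}$, forcing $|\pi_i^{-1}(a)|\le1$ for every $a$; that is, $\pi_i$ is injective. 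I expect this direction to be the main obstacle: a naive sign-alignment argument only rules out fibres of size exactly two, whereas the Parseval identity handles fibres of arbitrary size uniformly and is what makes the equivalence clean.

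Finally I would combine the two equivalences. Assuming $\pi_1,\pi_2$ are injective, the Walsh support of $f_i$ is exactly $\{(a,b): a\in\ima(\pi_i)\}$, which is independent of $b$. Thus $f_1$ and $f_2$ have disjoint spectra if and only if no $a$ lies in both supports, i.e.\ $\ima(\pi_1)\cap\ima(\pi_2)=\varnothing$. Chaining these: $f$ is bent $\iff$ $f_1,f_2$ are disjoint spectra semi-bent $\iff$ $\pi_1,\pi_2$ are injective and $\ima(\pi_1)\cap\ima(\pi_2)=\varnothing$, which is the claim. As a consistency check, injectivity forces $|\ima(\pi_i)|=2^{k}$, and disjointness then makes $\ima(\pi_1),\ima(\pi_2)$ partition $\F_2^{k+1}$, matching the expectation that two complementary semi-bent functions have complementary Walsh supports.
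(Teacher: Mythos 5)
Your proof is correct, but it takes a genuinely different route from the paper's. The paper's proof is a one-line structural observation: relabelling the concatenation variable as an extra coordinate $y_{k+1}$ of the $y$-part, $f=f_1||f_2$ is itself of Maiorana--McFarland shape, $f(x,(y,y_{k+1}))=x\cdot\pi(y,y_{k+1})+h(y,y_{k+1})$ with $\pi(y,0)=\pi_1(y)$ and $\pi(y,1)=\pi_2(y)$, so the classical criterion that such a function is bent iff $\pi$ is a permutation of $\F_2^{k+1}$ gives the statement at once, since $\pi$ is a permutation iff $\pi_1,\pi_2$ are injective with disjoint images. You instead keep the two halves separate, invoke the Zheng--Zhang characterization ($f$ bent iff $f_1,f_2$ are disjoint-spectra semi-bent, which the paper also quotes just before the theorem), and verify it by computing $W_{f_i}$ fibre-wise. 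Your computation is sound: the formula $W_{f_i}(a,b)=2^{k+1}S^{(i)}_a(b)$ is correct, the Parseval-over-the-fibre identity $\sum_{b}|S^{(i)}_a(b)|^2=2^{k}|\pi_i^{-1}(a)|$ does force injectivity from semi-bentness (and you are right that this is the step a naive sign argument would fumble), and identifying the Walsh support as $\ima(\pi_i)\times\F_2^{k}$ turns the disjoint-spectra condition into exactly $\ima(\pi_1)\cap\ima(\pi_2)=\varnothing$. What the paper's route buys is brevity and the reuse of a single classical fact about the assembled function; what yours buys is the explicit intermediate statement that $f_i$ is semi-bent iff $\pi_i$ is injective, together with the exact Walsh supports, information the paper's argument never makes visible. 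In effect you re-derive, on each half separately, the same Walsh computation that underlies the Maiorana--McFarland criterion the paper cites for the whole.
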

\begin{proof} 
	Notice that $f=f_1||f_2 \colon \F_2^{k+1} \times \F_2^{k+1} \to \F_2$ is the function defined by $f(x,y)=x \cdot \pi (y,y_{k+1})+h(y,y_{k+1})$,  for all $x \in \F_2^{k+1}, \; y \in \F_2^k$ and $y_{n+1} \in \F_2,$ where $\pi$ is defined by $\pi(y,0)=\pi_1(y)$ and $\pi(y,1)=\pi_2(y)$, and similarly $h(y,0)=h_1(y)$ and $h(y,1)=h_2(y)$, for all $y \in \F_2^k$.
	We know that $f$ is bent if and only if $\pi$ is a permutation, and $\pi$ is a permutation if and only if $\ima(\pi_1) \cap \ima(\pi_2)=\varnothing$ and $\pi_1$ and $\pi_2$ are injective mappings.
\end{proof}
However,  it turns out that $f=f_1||f_2 \in \cM^\#$ since $f_1$ and $f_2$ share an $\mathcal{M}$-subspace of maximal dimension.
\begin{rem}
Any construction method employing the functions $f_i(x,y)=x \cdot \phi_i(y) + h_i(y)$, where $x \in \F_2^{k+1}$ and $y \in \F_2^k$ (consequently $\phi_i : \F_2^k \rightarrow \F_2^{k+1}$), will only provide a function $f$ which belongs to $\cM^\#$. This is due to Corollary~\ref{cor:completecharact} and the fact that $\F_2^{k+1} \times \{0_k\}$ is a canonical $\mathcal{M}$-subspace of dimension $k+1$ which is shared by $f_1$ and $f_2$.
\end{rem}

\section{Concatenation of four Functions}\label{sec: concat 4}

Similarly as in the case of two functions concatenation, we derive the following formulas for the second-order derivatives of $f=f_1 \vert \vert f_2 \vert \vert f_3 \vert \vert f_4$ (where $f_i$ are  suitable bent, semi-bent or five-valued spectra functions) if $f$ is bent \cite{Decom}). For a function $h: \F_2^m \to \F_2$ and $r \in \F_2^m$ by $h^r$, we denote the translation of $h$ by $r$, that is $h^r(x)=h(x+r)$, for all $x \in \F_2^m$. In the following formulas, $a$ and $b$ are two arbitrary elements from $\F_2^n$, not necessarily different.
\begin{equation}\label{eq: secder4_0000}
	\begin{split}
		&D_{(a,0,0)}D_{(b,0,0)}f=D_{(a,0,0)}D_{(b,0,0)}(f_1 \vert \vert f_2 \vert \vert f_3 \vert \vert f_4)\\ & =D_aD_bf_1 \vert \vert D_aD_bf_2 \vert \vert D_aD_bf_3 \vert \vert D_aD_bf_4
	\end{split}	
\end{equation}
\begin{equation}\label{eq: secder4_1000}
	\begin{split}
		&D_{(a,1,0)}D_{(b,0,0)}f=(D_bf_1+D_bf_2^a) \vert \vert \\& (D_bf_1+D_bf_2^a)^a \vert \vert  (D_bf_3+D_bf_4^a) \vert \vert  (D_bf_3+D_bf_4^a)^a
	\end{split}
\end{equation}
\begin{equation}\label{eq: secder4_0100}
	\begin{split}
		&D_{(a,0,1)}D_{(b,0,0)}f=(D_bf_1+D_bf_3^a) \vert \vert \\ & (D_bf_2+D_bf_4^a) \vert \vert (D_bf_1+D_bf_3^a)^a \vert \vert (D_bf_2+D_bf_4^a)^a
	\end{split}
\end{equation}
\begin{equation}\label{eq: secder4_1100}
	\begin{split}
		&D_{(a,1,1)}D_{(b,0,0)}f=(D_bf_1+D_bf_4^a) \vert \vert \\ & (D_bf_2+D_bf_3^a) \vert \vert  (D_bf_2+D_bf_3^a)^a \vert \vert (D_bf_1+D_bf_4^a)^a
	\end{split}
\end{equation}
\begin{equation}\label{eq: secder4_0110}
	\begin{split}
		&D_{(a,0,1)}D_{(b,1,0)}f= (f_1 +f_2^b + f_3^a +f_4^{a+b}) \vert \vert \\& (f_1 +f_2^b + f_3^a +f_4^{a+b})^b \vert \vert (f_1 +f_2^b + f_3^a +f_4^{a+b})^a \vert \vert \\& (f_1 +f_2^b + f_3^a +f_4^{a+b})^{a+b}.
	\end{split}
\end{equation}

Compared to Proposition V.2 in \cite{PPKZ2023}, the result below gives the most general structure of $\mathcal{M}$-subspaces of varying dimension for a 4-concatenation of not necessarily bent functions. 
\begin{theo}\label{th: formofMsubspaces4concatenation}
	Let $f=f_1 \vert \vert f_2 \vert \vert f_3 \vert \vert f_4 \colon \F_2^{n+2} \to \F_2$ be the concatenation of arbitrary Boolean functions $f_1, \ldots ,f_4 \in \mathcal{B}_n$ and let $W$ be a $(k+2)$-dimensional subspace of $\F_2^{n+2}$, $k \in \{0, \ldots,n \}$. Then, $W$ is an $\cM$-subspace of $f$ if and only if $W$ has one of the following forms:
	\begin{enumerate}[a)]
		\item \label{item a, th: form} $W= V \times \{(0,0) \}$, where $V \subset \F_2^n$ is a common $(k+2)$-dimensional $\cM$-subspace of $f_1, \ldots ,f_4$.
		\item $W= \langle V \times \{(0,0)\}, (a,1,0) \rangle$, where $V$ is a common $(k+1)$-dimensional $\cM$-subspace of $f_1, \ldots ,f_4$, and $a \in \F_2^n$ is such that 
		$$D_vf_1+D_vf_2^a= D_vf_3+D_vf_4^a=0, \text{ for all } v \in V.$$
		\item $W= \langle V \times \{(0,0)\}, (a,0,1) \rangle$, where $V$ is a common $(k+1)$-dimensional $\cM$-subspace of $f_1, \ldots ,f_4$, and $a \in \F_2^n$ is such that 
		$$D_vf_1+D_vf_3^a= D_vf_2+D_vf_4^a=0, \text{ for all } v \in V.$$
		\item $W= \langle V \times \{(0,0)\}, (a,1,1) \rangle$, where $V$ is a common $(k+1)$-dimensional $\cM$-subspace of $f_1, \ldots ,f_4$, and $a \in \F_2^n$ is such that 
		$$D_vf_1+D_vf_4^a= D_vf_2+D_vf_3^a=0, \text{ for all } v \in V.$$
		\item \label{item e, th: form}$W= \langle V \times \{(0,0)\}, (a,0,1), (b,1,0) \rangle$, where $V$ is a common $k$-dimensional $\cM$-subspace of $f_1, \ldots ,f_4$, and $a,b \in \F_2^n$ are such that 
		$D_vf_1+D_vf_3^a= D_vf_2+D_vf_4^a=D_vf_1+D_vf_2^b= D_vf_3+D_vf_4^b=0, \text{ for all } v \in V$,  and
		$f_1(x)+f_2(x+b)+f_3(x+a)+f_4(x+a+b)=0$, for all  $x \in \F_2^n$.
	\end{enumerate}
\end{theo}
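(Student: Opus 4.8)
The plan is to reduce the statement to a case analysis governed by how $W$ sits above the two concatenation coordinates, and then to read the conditions directly off the second-order derivative formulas \eqref{eq: secder4_0000}--\eqref{eq: secder4_0110}. First I would record the reduction that $W$ is an $\mathcal{M}$-subspace if and only if $D_p D_q f \equiv 0$ for all $p,q$ in a fixed spanning set of $W$. The point requiring care is that vanishing of the second-order derivatives along a basis forces vanishing along the whole subspace: after a linear change of coordinates sending a basis of $W$ to standard vectors, $D_{e_i}D_{e_j}f \equiv 0$ says that no ANF monomial of $f$ contains two of the corresponding variables, so $f$ is affine on every coset of $W$, which is exactly the $\mathcal{M}$-subspace condition. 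This lets me argue with generators only.

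Next I would introduce the projection $P\colon \F_2^{n+2} \to \F_2^2$, $P(z,z_{n+1},z_{n+2})=(z_{n+1},z_{n+2})$, and restrict it to $W$. Writing $\ker(P|_W)=V\times\{(0,0)\}$ and $U=P(W)$, the rank--nullity theorem gives $\dim V=(k+2)-\dim U$, so $\dim U\in\{0,1,2\}$ produces exactly the five forms: $\dim U=0$ gives form \ref{item a, th: form}); $\dim U=1$ splits into b), c), d) according to which of the three lines $\langle(1,0)\rangle$, $\langle(0,1)\rangle$, $\langle(1,1)\rangle$ equals $U$; and $\dim U=2$, with $U=\F_2^2$, gives form \ref{item e, th: form}). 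In each case I select coset representatives of the generators projecting to $(1,0)$, $(0,1)$, or $(1,1)$, which yields the vectors $(a,\cdot,\cdot)$ and $(b,\cdot,\cdot)$ in the statement.

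The core of the argument is then to test $D_pD_qf\equiv 0$ on all pairs of generators and translate each vanishing into a condition on $f_1,\ldots,f_4$. Pairs $((v,0,0),(v',0,0))$ give, via \eqref{eq: secder4_0000}, that $V$ is a common $\mathcal{M}$-subspace of $f_1,\ldots,f_4$; pairs mixing a $(v,0,0)$ with a generator of nonzero projection give, via \eqref{eq: secder4_1000}, \eqref{eq: secder4_0100}, \eqref{eq: secder4_1100}, the linear derivative conditions listed in b)--e); and in e) the pair of the two nontrivial generators gives, via \eqref{eq: secder4_0110}, the global identity $f_1+f_2^b+f_3^a+f_4^{a+b}=0$. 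The only routine simplification is that a concatenation $h_1\vert\vert h_2\vert\vert h_3\vert\vert h_4$ vanishes iff every block vanishes, and since several blocks in \eqref{eq: secder4_1000}--\eqref{eq: secder4_0110} are translates of the others, the four block-equations collapse to the two (resp. one) stated conditions. Because every implication in this translation is an equivalence, both directions of the theorem are obtained simultaneously.

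I expect the main obstacle to be bookkeeping rather than genuine difficulty: verifying that the chosen generators exhaust all pairs that must be tested, so that no extra condition arises from sums such as $(a+b,1,1)$ in case e) (this is automatic once the spanning-set reduction is in place), and checking that the conditions are insensitive to the choice of coset representatives $a,b$ modulo $V$. One should also confirm that $\ker(P|_W)$ genuinely has the product form $V\times\{(0,0)\}$, which is immediate, and that the derivative formulas, stated for arbitrary $a,b\in\F_2^n$, apply verbatim when one argument is constrained to lie in $V$.
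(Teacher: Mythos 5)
Your proposal is correct and follows essentially the same route as the paper's own proof: project $W$ onto the last two coordinates, use rank--nullity and the five subspaces of $\F_2^2$ to obtain the five forms, and read each condition off Eqs.~\eqref{eq: secder4_0000}--\eqref{eq: secder4_0110} applied to pairs of generators. The only difference is cosmetic: you justify the reduction to a spanning set via an ANF/coordinate-change argument, whereas the paper proves the same fact (Lemma~\ref{lemma: subspacevsbasisofderivatives}) by a direct telescoping computation of derivatives; both are valid.
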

\begin{proof}(Sketch) Assume first that $W$ is an $\cM$-subspace of $f$. Let $P \colon W \to \F_2^2$ be the projection on the last two coordinates, i.e., $P((w_1, \dots ,w_{n+1},w_{n+2}))=(w_{n+1},w_{n+2})$, for all $(w_1, \dots ,w_{n+1},w_{n+2}) \in W$.
	There are $5$ subspaces of $\F_2^2$, and depending on which subspace $\ima(P)$ is equal to, we obtain the five corresponding forms \ref{item a, th: form}) - \ref{item e, th: form}) of the subspace $W$. The proof follows by applying Eqs. \eqref{eq: secder4_0000} - \eqref{eq: secder4_0110}. The other direction is proved similarly.
\end{proof}

\begin{rem}\label{rem:PropositionV2}
	Proposition V.2 in \cite{PPKZ2023}	specifies the structure of $\mathcal{M}$-subspaces of maximal dimension $m+1$ for $f=f_1||f_2||f_3||f_4$, where both $f$ and $f_i \in \B_{2m}$ are bent and additionally at least one $f_i$ admits the canonical $\mathcal{M}$-subspace $U=\F_2^m \times  \{0_m\}$. Thus, it is a special case of Theorem \ref{th: formofMsubspaces4concatenation}.
\end{rem}

From Theorem \ref{th: formofMsubspaces4concatenation}, we obtain the following full characterization of the class inclusion of $f=f_1 \vert \vert f_2 \vert \vert f_3 \vert \vert f_4$ in the $\cM^{\#}$ class in terms of properties of $f_1, \ldots, f_4$.

\begin{cor}\label{cor: 4concatenationoutsideMM}
	Let $f=f_1 \vert \vert f_2 \vert \vert f_3 \vert \vert f_4 \colon \F_2^{n+2} \to \F_2$ be the concatenation of $f_1, \ldots ,f_4 \in \mathcal{B}_n$ and assume that $f$ is bent; thus $f_i$ are bent, semi-bent or five-valued spectra functions. Then, $f$ is outside of the $\cM^{\#}$ class if and only if the following conditions hold:
	\begin{enumerate}[a)]
		\item The functions $f_1, \dots ,f_4$ do not share a common $(n/2+1)$-dimensional $\cM$-subspace; 
		\item There are no common $(n/2)$-dimensional $\cM$-subspaces $V \subset \F_2^n$ of $f_1, \dots ,f_4$ such that there is an element $a\in \F_2^n$ for which
		\begin{equation}
			\begin{gathered}
				D_vf_1+D_vf_2^a= D_vf_3+D_vf_4^a=0, \text{ for all } v \in V, \text{ or} \\
				D_vf_1+D_vf_3^a= D_vf_2+D_vf_4^a=0, \text{ for all } v \in V, \text{ or} \\
				D_vf_1+D_vf_4^a= D_vf_2+D_vf_3^a=0, \text{ for all } v \in V.
			\end{gathered}
		\end{equation}
		\item There are no common $(n/2-1)$-dimensional $\cM$-subspaces $V \subset \F_2^n$ of $f_1, \dots ,f_4$ such that there are elements $a,b\in \F_2^n$ (not necessarily different), for which
		\begin{equation}
			\begin{split}
				&D_vf_1+D_vf_3^a= D_vf_2+D_vf_4^a=D_vf_1+D_vf_2^b\\ & =D_vf_3+D_vf_4^b=0, \text{ for all } v \in V, \text{ and} \\
				&f_1(x)+f_2(x+b)+f_3(x+a)\\&+f_4(x+a+b)=0, \text{ for all } x \in \F_2^n.
			\end{split}
		\end{equation}
	\end{enumerate}
\end{cor}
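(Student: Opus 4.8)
The plan is to reduce membership of $f$ in $\cM^\#$ to the existence of an $\cM$-subspace of maximal dimension, and then to read off the answer directly from the full classification of such subspaces provided by Theorem~\ref{th: formofMsubspaces4concatenation}. Since $f$ is bent on $\F_2^{n+2}$, the number of variables $n+2$ is even, so $n$ is even and $n/2$ is an integer. By Dillon's characterization recalled in the Preliminaries (used also just before Corollary~\ref{cor:completecharact}), a bent function on $\F_2^{n+2}$ lies in $\cM^\#$ if and only if it admits an $\cM$-subspace of dimension $(n+2)/2 = n/2+1$. Equivalently, $f$ is \emph{outside} $\cM^\#$ if and only if $f$ has \emph{no} $(n/2+1)$-dimensional $\cM$-subspace.

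First I would set $k+2 = n/2+1$, i.e.\ $k = n/2-1$, which is an admissible value of the parameter in Theorem~\ref{th: formofMsubspaces4concatenation} since $n \geq 2$. That theorem then lists exhaustively, through the five forms \ref{item a, th: form})--\ref{item e, th: form}), every $(n/2+1)$-dimensional $\cM$-subspace $W$ of $f$; as its proof shows, the five cases correspond to the five possible images of the projection $P\colon W \to \F_2^2$ onto the last two coordinates (equivalently, to the five subspaces of $\F_2^2$). Consequently, $f$ has no $(n/2+1)$-dimensional $\cM$-subspace if and only if none of the five forms can be realized, and it remains to translate the non-realizability of each form into the stated conditions.

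Next I would match the forms to the three conditions of the corollary. The non-existence of a subspace of form \ref{item a, th: form}) is exactly the assertion that $f_1,\dots,f_4$ do not share a common $(n/2+1)$-dimensional $\cM$-subspace, which is condition~a). Forms b), c), d) each require a common $(n/2)$-dimensional $\cM$-subspace $V$ together with a vector $a \in \F_2^n$ satisfying one of the three pairs of derivative identities; their simultaneous non-realizability is precisely condition~b), the three displayed equations there being the three symmetric variants arising from the directions $(a,1,0)$, $(a,0,1)$, $(a,1,1)$. Finally, the non-existence of a subspace of form \ref{item e, th: form}), which needs a common $(n/2-1)$-dimensional $\cM$-subspace $V$ and vectors $a,b$ satisfying both the four derivative conditions on $V$ and the additional identity $f_1(x)+f_2(x+b)+f_3(x+a)+f_4(x+a+b)=0$ coming from Eq.~\eqref{eq: secder4_0110}, is exactly condition~c). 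Combining these three translations yields the claimed equivalence.

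The argument is a direct corollary, so there is no genuine analytic obstacle; the only point requiring care is the bookkeeping. Specifically, I must verify that the dimension arithmetic ($k+2 = n/2+1$, so that the forms use common $\cM$-subspaces of dimensions $n/2+1$, $n/2$, and $n/2-1$ for cases a), (b,c,d), and e) respectively) lines up exactly with the dimensions appearing in conditions a), b), c), and that negating the disjunction ``$W$ has one of the forms a)--e)'' correctly produces the conjunction of the three non-existence statements. Because Theorem~\ref{th: formofMsubspaces4concatenation} is an \emph{if and only if} classification of all maximal-dimensional $\cM$-subspaces, this negation is clean and no candidate subspace is overlooked.
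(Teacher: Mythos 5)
Your proposal is correct and follows exactly the paper's own route: setting $k+2=n/2+1$ in Theorem~\ref{th: formofMsubspaces4concatenation} and invoking Dillon's criterion that a bent function on $\F_2^{n+2}$ lies in $\cM^{\#}$ iff it admits an $(n/2+1)$-dimensional $\cM$-subspace, with forms a), b)--d), e) matching conditions a), b), c) respectively. The only difference is that you spell out the dimension bookkeeping that the paper leaves implicit.
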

\begin{proof}
	The result follows directly from Theorem \ref{th: formofMsubspaces4concatenation}, by setting $k+2=n/2+1$, and the fact that a bent function $f \in \mathcal{B}_{n+2}$ is in the $\cM^{\#}$ class if and only if it has an $(n/2+1)$-dimensional $\cM$-subspace. 
\end{proof}
Notice that when $f_i$ are bent in Corollary \ref{cor: 4concatenationoutsideMM}, then  the item $a)$ is automatically satisfied since none of the functions $f_i$ admits an $\mathcal{M}$-subspace of dimension $n/2+1$. The condition in $b)$ was recently deduced in \cite[Corollary V.11]{PPKZ2023} for a special case  when $f_i$ are bent functions on $\F_2^n$ that share an $\mathcal{M}$-subspace of maximal dimension $n/2$. 
\begin{op}\label{op:sufficiency}
	Is the condition $c)$ in Corollary \ref{cor: 4concatenationoutsideMM} independent of conditions $a),b)$? Particularly, the existence of bent functions $f=f_1 \vert \vert f_2 \vert \vert f_3 \vert \vert f_4$ on $\F_2^{n+2}$ in $\cM^\#$, where all $f_i\in\mathcal{B}_n$ are bent and outside $\cM^\#$, is hard to establish.
\end{op}
Notice that, when $f=f_1||f_1||f_1||f_1+1$ so that $f(x,y_1,y_2)=f_1(x)+y_1y_2$, where $f_1$ is a bent function on $\F_2^n$, it was deduced \cite{ZPBBInfComp} that  $f$ is outside $\cM^\#$ if and only if $f_1$ is outside $\cM^\#$. This result also follows from Theorem~\ref{th: insideMMgh} below, as we show in the next section.

\section{An Application: Designing bent functions outside $\cM^\#$ of the form $g||h||g||(h+1)$}\label{sec:design}
The concatenation $f=g||h||g||h+1$  (where $g$ and $h$ are bent) is interesting in terms of the class inclusion, as the dual bent condition is automatically satisfied. Recall that when $f_i$ are all bent, then $f=f_1||f_2||f_3||f_4$ is bent if and only if $f_1^* + f_2^* + f_3^* +f_4^*=1$; see \cite{SHCF}.
The analysis of structural properties of $\cM$-subspaces presented in the previous section turns out to be useful when considering certain special cases of bent 4-concatenation. 
\subsection{The necessary and sufficient condition for $f=g||h||g||(h+1)$ to be outside $\mathcal{M}^\#$}
\begin{theo}\label{th: insideMMgh} Let $h$ and $g$ be two arbitrary bent functions in $\mathcal{B}_n$. Then, the function $f=f_1 \vert \vert f_2 \vert \vert f_3 \vert \vert f_4 \colon \F_2^{n+2} \to \F_2$, where $f_1=f_3=g$ and $f_2=f_4+1=h$ is a bent function in the $\cM^{\#}$ class if and only if the functions $g$ and $h$ have a common $(n/2)$-dimensional $\cM$-subspace, thus $g, h \in \cM^\#$.
\end{theo}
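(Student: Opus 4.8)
The plan is to apply Corollary~\ref{cor: 4concatenationoutsideMM} with the specialization $f_1=f_3=g$, $f_2=h$, $f_4=h+1$, and determine exactly when $f$ admits an $\cM$-subspace of maximal dimension $n/2+1$, i.e. when $f\in\cM^\#$. Since $g$ and $h$ are bent, condition~a) of the corollary is automatic, so $f\in\cM^\#$ forces either a type-b) or a type-c) configuration (in the language of Theorem~\ref{th: formofMsubspaces4concatenation}, a top-dimensional $\cM$-subspace whose image under the last-two-coordinate projection is one-dimensional or all of $\F_2^2$).

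First I would examine the type-b) configurations. Substituting $f_1=f_3=g$, $f_2=h$, $f_4=h+1$ into the three displayed systems of Theorem~\ref{th: formofMsubspaces4concatenation}\,b)--d), the equations collapse dramatically because derivatives annihilate constants: for instance $D_vf_3+D_vf_4^a=D_vg+D_vh^a$ and $D_vf_2+D_vf_4^a=D_vh+D_vh^a$. The cleanest branch is the $(a,0,1)$ case, where the system becomes $D_vg+D_vg^a=0$ and $D_vh+D_vh^a=0$ for all $v\in V$; taking $a=0_n$ and $V$ a common $(n/2)$-dimensional $\cM$-subspace of $g$ and $h$ produces a genuine top-dimensional $\cM$-subspace of $f$. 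This shows the ``if'' direction: if $g$ and $h$ share an $(n/2)$-dimensional $\cM$-subspace $V$, then $W=\langle V\times\{(0,0)\},(0_n,0,1)\rangle$ witnesses $f\in\cM^\#$. Note this also settles the embedded claim $g,h\in\cM^\#$, since a bent function on $\F_2^n$ lies in $\cM^\#$ exactly when it has an $(n/2)$-dimensional $\cM$-subspace.

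For the ``only if'' direction I would assume $f\in\cM^\#$ and rule out every way this could happen \emph{without} $g,h$ sharing an $(n/2)$-dimensional $\cM$-subspace. The main obstacle is the type-c) configuration of Theorem~\ref{th: formofMsubspaces4concatenation}\,e), which only requires a common $(n/2-1)$-dimensional $\cM$-subspace $V$ together with vectors $a,b$ satisfying the four derivative conditions and the fourth-order ``dual bent'' identity $g(x)+h(x+b)+g(x+a)+(h+1)(x+a+b)=0$. Here I would substitute the specialization and simplify: the four conditions reduce to $D_vg+D_vg^a=0$, $D_vh+D_vh^a=0$, $D_vg+D_vh^b=0$, $D_vg+D_vh^b=0$ for $v\in V$ (with the last two coinciding), while the identity becomes $D_aD_b(\text{something})$-type constraint forcing $g$ and $h^{b}$-shifts to align. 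The key step is to show that these conditions force $V$ together with one of $a$ or $b$ to enlarge into a common $(n/2)$-dimensional $\cM$-subspace of both $g$ and $h$: the conditions $D_vg+D_vg^a=0$ say $a$ extends $V$ to an $\cM$-subspace of $g$, and $D_vh+D_vh^a=0$ say the same $a$ works for $h$, whence $\langle V,a\rangle$ is the desired shared $(n/2)$-dimensional $\cM$-subspace (and symmetrically for the type-b) cases, where the enlarging vector is already built in). I expect the delicate bookkeeping to be verifying that the flagged vector $a$ (resp.\ $b$) is genuinely outside $V$ so that the dimension actually increases, and in handling the degenerate sub-cases where $a=0_n$ or $a\in V$; in those degenerate cases the fourth-order identity and the remaining linear-algebra constraints must be invoked to recover an honest $(n/2)$-dimensional common $\cM$-subspace.
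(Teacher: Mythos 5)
Your overall strategy (specialize Corollary~\ref{cor: 4concatenationoutsideMM}, dispose of condition a) automatically, settle the ``if'' direction via the $(0_n,0,1)$ extension of a shared $(n/2)$-dimensional $\cM$-subspace) matches the paper, and that part is correct. The gap is in the ``only if'' direction, precisely in the type-e) configuration, and it sits exactly where you flag the ``delicate bookkeeping.'' Your plan is to show that the vector $a$ (satisfying $D_vg+D_vg^a=0$ and $D_vh+D_vh^a=0$ for all $v$ in the common $(n/2-1)$-dimensional $\cM$-subspace $V$) enlarges $V$ to a shared $(n/2)$-dimensional $\cM$-subspace $\langle V,a\rangle$, and for the degenerate sub-case $a\in V$ you propose to ``invoke the fourth-order identity to recover an honest $(n/2)$-dimensional common $\cM$-subspace.'' That cannot work as stated: when $a\in V$ no enlargement is available from $a$, and the remaining data (the conditions on $b$ are $D_vg+D_vh^b=0$, which mix $g$ and $h$ and do not make $\langle V,b\rangle$ an $\cM$-subspace of either function) does not produce a shared $(n/2)$-dimensional subspace. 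In fact none exists in that sub-case; the correct resolution is that the sub-case is \emph{impossible}, which is the actual heart of the proof and is absent from your proposal.

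Concretely, under the contrapositive hypothesis that $g$ and $h$ share no $(n/2)$-dimensional $\cM$-subspace, the enlargement argument forces $a\in V$. Since the condition $D_vf_1+D_vf_2^b=0$ holds for all $v\in V$, you may set $v=a$ and obtain
\begin{equation*}
g(x)+g(x+a)+h(x+b)+h(x+a+b)=0,\quad\text{for all }x\in\F_2^n,
\end{equation*}
whereas the fourth-order identity $f_1(x)+f_2(x+b)+f_3(x+a)+f_4(x+a+b)=0$, after substituting $f_1=f_3=g$ and $f_2=f_4+1=h$, reads
\begin{equation*}
g(x)+g(x+a)+h(x+b)+h(x+a+b)=1,\quad\text{for all }x\in\F_2^n.
\end{equation*}
These two displays contradict each other, so the type-e) configuration with $a\in V$ never occurs, and the only surviving sources of an $(n/2+1)$-dimensional $\cM$-subspace of $f$ are those that hand you a common $(n/2)$-dimensional $\cM$-subspace of $g$ and $h$ directly. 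Without this substitution $v=a$ and the resulting $0=1$ contradiction, your argument does not close the case that actually requires work.
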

\begin{proof}
	We compute $f_1^*+f_2^*+f_3^*+f_4^*=g^*+h^*+g^*+h^*+1=1$, hence $f$ is a bent function. Let $V \subset \F_2^n$ be a common $(n/2)$-dimensional $\cM$-subspace of $g$ and $h$. Then, $V$ is also a common $(n/2)$-dimensional $\cM$-subspace of $f_1, \ldots ,f_4$ and $D_vf_1+D_vf_3=D_vg+D_vg=0$, $D_vf_2+D_vf_4=D_vh+D_vh=0$,
	for all $v \in V$. Setting $a=0_n$ in the item $b)$ of Corollary \ref{cor: 4concatenationoutsideMM}, we deduce that $f$ is a bent function in $\cM^{\#}$.
	
	Assume now that $g$ and $h$ do not have a common $(n/2)$-dimensional $\cM$-subspace, and that $f \in \cM^{\#}$. 
	Then, the cases $a)$ and $b)$ in Corollary \ref{cor: 4concatenationoutsideMM} hold, hence it has to be the case $c)$ that fails. That is, there is 
	a common $(n/2-1)$-dimensional $\cM$-subspace $V \subset \F_2^n$ of $f_1, \dots ,f_4$, (i.e. of $g$ and $h$)
	such that there are elements $a,b\in \F_2^n$ (not necessarily different), for which
	\begin{equation*}
		\begin{split}
			&D_vf_1+D_vf_3^a= D_vf_2+D_vf_4^a=D_vf_1+D_vf_2^b=\\ &D_vf_3+D_vf_4^b=0, \text{ for all } v \in V, \text{ and} \\
			&f_1(x)+f_2(x+b)+f_3(x+a)+f_4(x+a+b)=0,\\& \text{ for all } x \in \F_2^n.
		\end{split}
	\end{equation*}
	From $D_vf_1+D_vf_3^a=0$, we get $D_vg+D_vg^a=D_aD_vg=0$, for all $v \in V$. Similarly,  $D_vf_2+D_vf_4^a=0$ implies $D_vh+D_vh^a=D_aD_vh=0$, for all $v \in V$. This implies that $a$ has to be in $V$, otherwise $\langle V ,a \rangle$ would be a common $(n/2)$-dimensional $\cM$-subspace of $g$ and $h$.
	Setting $v=a$ in $D_vf_1+D_vf_2^b=0$, we get 
	\begin{equation}\label{eq: zerosuumofgh}
		\begin{split}
			&g(x)+g(x+a)+h(x+b)+h(x+a+b)=0,\\
			&\text{ for all } x \in \F_2^n.
		\end{split}
	\end{equation} 
		On the other hand, from $f_1(x)+f_2(x+b)+f_3(x+a)+f_4(x+a+b)=0$ we have $g(x)+h(x+b)+g(x+a)+h(x+a+b)+1=0$,
		that is $g(x)+g(x+a)+h(x+b)+h(x+a+b)=1$, for all $x \in \F_2^n$. However, this is in contradiction with Eq.\ \eqref{eq: zerosuumofgh}. We conclude that $f$ is a bent function outside the $\cM^{\#}$ class.
\end{proof}

	\begin{rem} 
		Notice that Theorem \ref{th: insideMMgh} answers negatively Open Problem \ref{op:sufficiency} when a bent function $f \in \B_{n+2}$ is represented as  $f=g||h||g||h+1$.
	\end{rem}
However, Theorem \ref{th: insideMMgh} provides a very flexible method of constructing bent functions outside $\cM^\#$ for $n \geq 10$. 
\begin{cor}\label{cor: explicitdesign}
	Let $g \in \mathcal{B}_n$ be any bent function outside $\cM^\#$, with $n \geq 8$, and $h$ be any bent function on $\F_2^n$. Then, the bent function $f \in \mathcal{B}_{n+2}$ defined as $f=g||h||g||h+1$ is outside the $\cM^{\#}$ class.
\end{cor}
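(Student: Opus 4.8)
The plan is to derive Corollary~\ref{cor: explicitdesign} as a direct consequence of Theorem~\ref{th: insideMMgh}. The key observation is that Theorem~\ref{th: insideMMgh} gives a clean biconditional: for $g,h\in\mathcal{B}_n$ bent, the concatenation $f=g||h||g||(h+1)$ lies in $\cM^\#$ \emph{if and only if} $g$ and $h$ share a common $(n/2)$-dimensional $\cM$-subspace. Contrapositively, $f$ is outside $\cM^\#$ precisely when $g$ and $h$ have \emph{no} common $(n/2)$-dimensional $\cM$-subspace. So the entire task reduces to verifying that, under the hypotheses of the corollary, such a shared maximal $\cM$-subspace cannot exist.

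First I would invoke the hypothesis that $g$ is bent and \emph{outside} $\cM^\#$. By Dillon's characterization recalled in the preliminaries (a bent function $f\in\mathcal{B}_n$ lies in $\cM^\#$ iff it admits an $\cM$-subspace of dimension $n/2$), the assumption $g\notin\cM^\#$ means exactly that $g$ possesses \emph{no} $(n/2)$-dimensional $\cM$-subspace at all. Consequently, a fortiori, $g$ cannot share any $(n/2)$-dimensional $\cM$-subspace with $h$, since any common $\cM$-subspace of $g$ and $h$ is in particular an $\cM$-subspace of $g$. This holds regardless of the choice of $h$, which is why $h$ may be taken to be an arbitrary bent function.

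Having established that $g$ and $h$ share no common $(n/2)$-dimensional $\cM$-subspace, I would then apply the "only if" direction of Theorem~\ref{th: insideMMgh}: since the condition for $f\in\cM^\#$ fails, the bent function $f=g||h||g||(h+1)$ must lie outside $\cM^\#$, which is the claim. The bentness of $f$ itself is guaranteed by the dual identity $g^*+h^*+g^*+(h^*+1)=1$ already noted in the proof of Theorem~\ref{th: insideMMgh}, so no separate verification is needed.

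I do not anticipate a genuine technical obstacle here, as the corollary is essentially an unpacking of the theorem through Dillon's criterion; the only point requiring mild care is the threshold on $n$. The corollary states $n\geq 8$, which simply ensures that bent functions outside $\cM^\#$ actually exist on $\F_2^n$ (the smallest dimension in which such functions are known to occur), so that the hypothesis "let $g$ be any bent function outside $\cM^\#$" is non-vacuous; the logical argument itself is dimension-independent. One should also confirm that $g$ being outside $\cM^\#$ on $\F_2^n$ with $n$ even is compatible with all four component functions $f_1,\dots,f_4$ being bent, but this is immediate since $f_1=f_3=g$ and $f_2=f_4+1=h$ are bent by hypothesis.
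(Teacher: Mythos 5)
Your proposal is correct and follows exactly the paper's own argument: apply the biconditional of Theorem~\ref{th: insideMMgh} and note that $g\notin\cM^\#$ means, by Dillon's criterion, that $g$ has no $(n/2)$-dimensional $\cM$-subspace at all, hence none to share with $h$. Your added remarks on the role of $n\geq 8$ (non-vacuity of the hypothesis) and on bentness of $f$ are accurate but not needed beyond what the paper states.
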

\begin{proof}
	By Theorem \ref{th: insideMMgh}, $f \in \cM^\#$ if and only if $g$ and $h$ share a common $(n/2)$-dimensional $\cM$-subspace. But since $g$ is outside $\cM^{\#}$ it does not admit any $(n/2)$-dimensional $\cM$-subspace, and therefore it cannot share with $h$ regardless of $h$ belongs to $\cM^{\#}$ or not. Thus, $f \in \B_{n+2}$ is outside $\cM^{\#}$.
\end{proof}
Another important consequence of Theorem \ref{th: insideMMgh} is the following result which also sheds  more light on the existence of bent functions outside $\cM^{\#}$, for the special case when $n=8$. 
\begin{cor}\label{cor: withoutaffinederivatives1}
	Let $g \in \mathcal{B}_n$ be an arbitrary  bent function $n \geq 6$. Then, there exists a bent function $f \in \mathcal{B}_{n+2}$ outside the $\cM^{\#}$ class such that $g(x)=f(x,0,0)$, for all $x \in \F_2^n$.
\end{cor}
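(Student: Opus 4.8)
The plan is to realize the prescribed bent function $g$ as the first block of a $4$-concatenation of the special shape treated in Theorem~\ref{th: insideMMgh}. Concretely, I would set $f=g\vert\vert h\vert\vert g\vert\vert(h+1)$ for a bent function $h\in\mathcal{B}_n$ still to be chosen; then automatically $f(x,0,0)=g(x)$ for all $x\in\F_2^n$, and by Theorem~\ref{th: insideMMgh} the function $f$ is bent (the dual condition $g^*+h^*+g^*+(h^*+1)=1$ holds). The same theorem reduces the entire problem to a single existence statement: $f$ lies outside $\cM^\#$ if and only if $g$ and $h$ share no common $(n/2)$-dimensional $\cM$-subspace. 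Hence it suffices to produce, for the given $g$, a bent function $h\in\mathcal{B}_n$ with $\mathcal{M}_{n/2}(g)\cap\mathcal{M}_{n/2}(h)=\varnothing$, where $\mathcal{M}_{k}(\cdot)$ denotes the set of $k$-dimensional $\cM$-subspaces of a function.

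For $n\ge 8$ this is immediate and, in fact, independent of $g$: it is well known that bent functions outside $\cM^\#$ exist in every even dimension $n\ge 8$, and such an $h$ has \emph{no} $(n/2)$-dimensional $\cM$-subspace at all, so the intersection is trivially empty. This is essentially the mechanism of Corollary~\ref{cor: explicitdesign}, except that the roles are swapped --- here it is $h$, not $g$, that is chosen outside $\cM^\#$ --- which is precisely what allows $g$ to be completely arbitrary.

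The genuinely delicate case is $n=6$, where every bent function already belongs to $\cM^\#$, so no candidate $h$ can be taken outside $\cM^\#$; this is the main obstacle. Here I would argue by a transitivity/counting argument for the action of $\mathrm{GL}(6,\F_2)$ on the $1395$ three-dimensional subspaces of $\F_2^6$. Fix any bent $h_0\in\mathcal{B}_6$ and note that for $A\in\mathrm{GL}(6,\F_2)$ one has $D_aD_b(h_0\circ A)=(D_{Aa}D_{Ab}h_0)\circ A$, so that $\mathcal{M}_{3}(h_0\circ A)=A^{-1}\mathcal{M}_{3}(h_0)$. Thus $h=h_0\circ A$ fails the required disjointness only when $AW=V$ for some $W\in\mathcal{M}_3(g)$ and some $V\in\mathcal{M}_3(h_0)$. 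Since the action is transitive, each equation $AW=V$ has exactly $\lvert\mathrm{GL}(6,\F_2)\rvert/1395$ solutions, so the number of ``bad'' $A$ is at most $\lvert\mathcal{M}_3(g)\rvert\cdot\lvert\mathcal{M}_3(h_0)\rvert\cdot\lvert\mathrm{GL}(6,\F_2)\rvert/1395$. A suitable $A$ therefore exists as soon as $\lvert\mathcal{M}_3(g)\rvert\cdot\lvert\mathcal{M}_3(h_0)\rvert<1395$, and then $h=h_0\circ A$ is the desired bent function.

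To close this estimate I would choose $h_0$ with as few maximal $\cM$-subspaces as possible; the key quantitative input --- and the step I expect to be the real work --- is exhibiting (explicitly, or via the classification of bent functions in six variables) a bent $h_0\in\mathcal{B}_6$ whose number $\lvert\mathcal{M}_3(h_0)\rvert$ of maximal $\cM$-subspaces is small. Note that no bent function can have \emph{all} $3$-dimensional subspaces as $\cM$-subspaces, since that would force every second-order derivative of $g$ to vanish and hence $g$ to be affine; consequently $\lvert\mathcal{M}_3(g)\rvert<1395$ holds unconditionally. In particular, if one can take $h_0$ with a \emph{unique} maximal $\cM$-subspace, then $\lvert\mathcal{M}_3(g)\rvert\cdot 1<1395$ for every $g$ and the argument goes through at once; more crudely, the bound $\lvert\mathcal{M}_3(g)\rvert\le 135$ for quadratic $g$ (whose maximal $\cM$-subspaces are the Lagrangians of the associated symplectic form) makes $\lvert\mathcal{M}_3(h_0)\rvert\le 10$ already sufficient. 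Once such an $h$ is in hand, Theorem~\ref{th: insideMMgh} yields that $f=g\vert\vert h\vert\vert g\vert\vert(h+1)\in\mathcal{B}_8$ is bent, outside $\cM^\#$, and restricts to $g$ on $z_{n+1}=z_{n+2}=0$, completing the proof.
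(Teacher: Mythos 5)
Your overall architecture is sound and your reduction is the same as the paper's: set $f=g\vert\vert h\vert\vert g\vert\vert(h+1)$ and use Theorem~\ref{th: insideMMgh} to turn the problem into finding a bent $h$ sharing no $(n/2)$-dimensional $\cM$-subspace with $g$. Where you diverge is in how you produce $h$. The paper gives a single argument uniform in $n\geq 6$: take $h$ with a \emph{unique} maximal $\cM$-subspace $V$ (existence cited from \cite{PPKZ2023}), pick $a,b$ with $D_aD_bg\neq 0$, and compose $h$ with a linear permutation that places $a,b$ inside the unique maximal $\cM$-subspace of the transformed function, which immediately forbids any shared subspace. You instead split cases: for $n\geq 8$ you take $h$ outside $\cM^{\#}$ (so it has no maximal $\cM$-subspace at all), which is a clean shortcut the paper does not use and which makes those dimensions independent of the ``unique $\cM$-subspace'' input; for $n=6$ you run a non-constructive orbit-counting argument over $\mathrm{GL}(6,\F_2)$ acting on the $1395$ three-dimensional subspaces. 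Your counting is correct (the identity $\mathcal{M}_3(h_0\circ A)=A^{-1}\mathcal{M}_3(h_0)$, the coset count $\lvert\mathrm{GL}(6,\F_2)\rvert/1395$, and the bound $\lvert\mathcal{M}_3(g)\rvert<1395$ all check out), but note that for an \emph{arbitrary} bent $g$ this still forces you to find $h_0\in\mathcal{B}_6$ with $\lvert\mathcal{M}_3(h_0)\rvert=1$ --- exactly the existence statement the paper imports from \cite{PPKZ2023} --- so your ``real work'' step is not avoidable and is precisely where the paper cites the literature; with that citation your proof closes. The trade-off: your route is more elementary for $n\geq 8$ and more robust in principle (it would tolerate $h_0$ with a few maximal $\cM$-subspaces if one had better bounds on $\lvert\mathcal{M}_3(g)\rvert$), while the paper's route is constructive, shorter, and handles all $n\geq 6$ at once without any counting.
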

\begin{proof}
	Let $h$ be a bent function in $n$ variables with a unique $(n/2)$-dimensional $\cM$-subspace $V$; see  \cite{PPKZ2023} for their  existence.
	Since $g$ is bent, thus not affine, there exist are two elements $a,b \in \F_2^n$ such that $D_aD_bg \neq 0$. Let $A$ be any affine permutation of $\F_2^n$ such that $A^{-1}( \{a,b\}) \subset V$. Define $h'=h \circ A$. Then, by construction $g$ and $h'$ do not share an $(n/2)$-dimensional $\cM$-subspace. Therefore, by Theorem \ref{th: insideMMgh}, the function 
	$
	f=g \vert \vert h' \vert \vert g \vert \vert (h'+1)
	$
	is a bent function outside the $\cM^{\#}$ class, and the result follows.
\end{proof}
 Note that certain  design methods of constructing 8-variable bent functions outside $\cM^{\#}$ using bent functions $f_1, \ldots, f_4 \in \cM^\#$ were considered in \cite{PPKZ2023}, but Corollary \ref{cor: withoutaffinederivatives1} confirms this fact theoretically and thus excludes the case that bent functions outside $\cM^\#$ originate from the 4-concatenation of semi-bent or five-valued spectra functions only. Moreover, it is always possible to find more than one permutation $A$ (from the proof of Corollary~\ref{cor: withoutaffinederivatives1}). It means that for $n \geq 6$, the number of bent functions outside $\mathcal{M}^\#$ in $n+2$ variables is always strictly greater than the number of all bent functions in $n$ variables.


	\begin{theo}\label{th: r-ind(f)}
		Let $n,k$ be two integers such that $k<n/2-1$.  Let  $g,h$ be two bent functions in $\mathcal{B}_n$ whose $\mathcal{M}$-subspaces of maximal dimension $k$ are mutually non-intersecting. 
		Assume that for any subspace $\Lambda \subset \F_2^n$ with $\dim(\Lambda)=k-1$, there exists $a\in \Lambda$ such that 
		$D_ag\neq D_ah.$
		Then,  $f=f_1 \vert \vert f_2 \vert \vert f_3 \vert \vert f_4 \colon \F_2^{n+2} \to \F_2$, where $f_1=f_3=g$ and $f_2=f_4+1=h$, is a bent function whose $\mathcal{M}$-subspaces have dimension $<k+1$.
	\end{theo}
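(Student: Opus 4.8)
The plan is to reduce the whole statement to the classification of $\cM$-subspaces in Theorem~\ref{th: formofMsubspaces4concatenation} and then to exploit the special dual structure of $f=g\,\vert\vert\,h\,\vert\vert\,g\,\vert\vert\,(h+1)$. First I would record that $f$ is bent, exactly as in Theorem~\ref{th: insideMMgh}: since $g,h$ are bent and $(h+1)^*=h^*+1$, the dual-bent condition $f_1^*+f_2^*+f_3^*+f_4^*=1$ holds. It then suffices to show that $f$ has no $(k+1)$-dimensional $\cM$-subspace, because any $\cM$-subspace of larger dimension contains one of dimension $k+1$. Assuming for contradiction that $W$ is such a subspace and applying Theorem~\ref{th: formofMsubspaces4concatenation} to our $W$ of dimension $k+1$, the subspace $W$ must be of one of the forms a)--e) listed there, where the associated common $\cM$-subspace $V$ of $f_1,\dots,f_4$ has dimension $k+1$, $k$, $k$, $k$ or $k-1$, respectively. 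Using $D_v(h+1)=D_vh$ and $(h+1)^a=h^a+1$, every condition in that theorem rewrites purely in terms of $g$ and $h$, and a common $\cM$-subspace of $f_1,\dots,f_4$ is simply a common $\cM$-subspace of $g$ and $h$.

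Next I would dispose of the forms a)--d). Form a) needs a common $(k+1)$-dimensional $\cM$-subspace of $g,h$, which is impossible since their $\cM$-subspaces have maximal dimension $k$. Forms b), c), d) each require a common $k$-dimensional $\cM$-subspace $V$; such a $V$ would be a maximal $\cM$-subspace of both $g$ and $h$, contradicting the hypothesis that their maximal $\cM$-subspaces are mutually non-intersecting. Thus the entire difficulty is concentrated in form e).

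For form e), $V$ is a common $(k-1)$-dimensional $\cM$-subspace of $g,h$, and the defining relations specialize to $D_aD_vg=D_aD_vh=0$ and $D_vg=(D_vh)^b$ for all $v\in V$, together with the compatibility identity $g(x)+g(x+a)+h(x+b)+h(x+a+b)=1$. The first two relations say that $\langle V,a\rangle$ is again a common $\cM$-subspace of $g$ and $h$; if $a\notin V$ this would be a common $k$-dimensional one, again contradicting mutual non-intersection, so $a\in V$. Setting $v=a$ in $D_vg=(D_vh)^b$ then yields $g(x)+g(x+a)=h(x+b)+h(x+a+b)$, whereas the compatibility identity asserts that these two expressions differ by the constant $1$; this is the contradiction $0=1$, ruling out form e) as well. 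I would use the $\Lambda$-hypothesis precisely to secure this case in the low-dimensional regime: it forbids the derivatives of $g$ and $h$ from matching on any $(k-1)$-dimensional subspace, which is what prevents a spurious case-e) configuration (with $V$ of dimension $k-1$) from surviving and, in particular, forces $k\ge 2$.

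The hard part will be form e), the only one not killed outright by maximality and non-intersection and the only one in which the two auxiliary directions $a,b$ interact. The decisive point is that, once $a$ is forced into $V$, the $b$-relation and the compatibility identity become directly contradictory, and the compatibility identity is available for free precisely because of the dual-bent shape $g\,\vert\vert\,h\,\vert\vert\,g\,\vert\vert\,(h+1)$. Finally, the inequality $k<n/2-1$ is used only to guarantee that all the dimensions appearing above are admissible and that the resulting bound---every $\cM$-subspace of $f$ has dimension $\le k<(n+2)/2$---indeed certifies that $f$ lies outside $\cM^\#$.
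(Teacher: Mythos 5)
Your proposal is correct, and its overall skeleton (establish bentness via the dual condition, reduce to the absence of $(k+1)$-dimensional $\cM$-subspaces, and eliminate the five forms of Theorem~\ref{th: formofMsubspaces4concatenation} one by one) coincides with the paper's appendix proof; forms a)--d) are disposed of in both arguments by maximality and non-intersection of the $k$-dimensional $\cM$-subspaces. Where you genuinely diverge is form e). The paper splits on whether the auxiliary vectors lie in the common $(k-1)$-dimensional subspace $\Delta$ and, in the subcase $a,b\in\Delta$, invokes the $\Lambda$-hypothesis to find $v\in\Delta$ with $D_vg\neq D_vh$, killing $W$ via Eq.~\eqref{eq: secder4_1000}. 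You instead import the mechanism from the proof of Theorem~\ref{th: insideMMgh}: the $a$-relations force $\langle V,a\rangle$ to be a common $\cM$-subspace, hence $a\in V$ by non-intersection, and then setting $v=a$ in $D_vf_1+D_vf_2^b=0$ yields $g+g^a+h^b+h^{a+b}=0$, which contradicts the compatibility identity $g+g^a+h^b+h^{a+b}=1$ coming from $f_4=h+1$. This is clean, avoids any case analysis on $b$, and buys something the paper's proof does not make explicit: for the specific concatenation $g\,\vert\vert\,h\,\vert\vert\,g\,\vert\vert\,(h+1)$ the $\Lambda$-hypothesis is never actually needed to rule out form e), since the parity shift in $f_4$ already makes the cross-condition of form e) unsatisfiable once $a\in V$. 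The one blemish is your closing claim that the $\Lambda$-hypothesis is ``used precisely to secure'' case e): as your own argument shows, it is not used at all there (nor anywhere else), so that sentence misattributes the role of a hypothesis rather than filling a gap; it does not affect the validity of the proof, but you should either delete it or remark explicitly that the hypothesis appears to be redundant in this special case, while it would matter for a general $4$-concatenation lacking the $+1$ twist.
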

	\begin{proof} (Sketch)  By assumption,  we have that $W= \langle V \times \{(0,0)\}, (a,i_1,i_2) \rangle$ is not an $\cM$-subspace of $f$, where $V$ is a  $k$-dimensional $\cM$-subspace of $g$ (resp. $h$), $(i_1,i_2)\in \F_2^2$. 
		Thus, let  $\Delta$ be a common $(k-1)$-dimensional $\cM$-subspace of $g$ and $h$.  Set
		$W= \langle \Delta \times \{(0,0)\}, (a,i_1,i_2),  (b,i_3,i_4) \rangle$, where $(i_1,i_2), (i_3,i_4)\in \F_2^2$ and $(i_1,i_2)\neq (i_3,i_4)$.
		Then, there are two cases to be considered, namely 1) $a\notin \Delta$ or $b\notin \Delta$ and 2) $a,b\in \Delta$. It can be shown that the vector space $W$, with $\dim(W)=k+1$, is not an $\cM$-subspace of $f$.
		The result follows then from Theorem \ref{th: formofMsubspaces4concatenation}.
	\end{proof}
%
%
	\begin{cor}\label{cor: r-ind(f)} Let $n$ be an even integer.  Let $\pi$ be a permutation such that   $g=x \cdot \pi(y)$ has only one  $(n/2)$-dimensional $\cM$-subspace.  
		Let $A$ be an invertible matrix on $\F_2^n$ such that $I+A$ is also an invertible matrix on $\F_2^n$.  Let $h= g\circ A$.
		Then, the function $f=f_1 \vert \vert f_2 \vert \vert f_3 \vert \vert f_4 \colon \F_2^{n+2} \to \F_2$, where $f_1=f_3=g$ and $f_2=f_4+1=h$, is a bent function outside  $\cM^{\#}$.
	\end{cor}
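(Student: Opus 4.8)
The plan is to obtain this corollary as a direct consequence of Theorem~\ref{th: insideMMgh}, so that everything collapses to a single statement about $(n/2)$-dimensional $\cM$-subspaces. First I would confirm that $f$ is bent: since $A$ is an invertible linear map, $h=g\circ A$ is EA-equivalent to the Maiorana--McFarland function $g$ and is therefore bent, and then the dual identity $g^{*}+h^{*}+g^{*}+(h^{*}+1)=1$ (exactly the computation opening the proof of Theorem~\ref{th: insideMMgh}) shows that $f=g\,\vert\vert\,h\,\vert\vert\,g\,\vert\vert\,(h+1)$ is bent. By Theorem~\ref{th: insideMMgh}, membership $f\in\cM^{\#}$ is then equivalent to $g$ and $h$ possessing a common $(n/2)$-dimensional $\cM$-subspace, so it suffices to prove that they share none.

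The second step identifies the two subspaces that can possibly be shared. For $g(x,y)=x\cdot\pi(y)$ the set $U=\F_2^{n/2}\times\{0_{n/2}\}$ is an $\cM$-subspace, since $g$ is affine in $x$ for each fixed $y$ and hence all its second-order derivatives in directions of $U$ vanish; by hypothesis $U$ is the \emph{unique} $(n/2)$-dimensional $\cM$-subspace of $g$. The transport principle I would use is the identity $D_aD_b(g\circ A)=\bigl(D_{Aa}D_{Ab}g\bigr)\circ A$, valid because $A$ is linear, which shows that a subspace $V$ is an $\cM$-subspace of $h=g\circ A$ if and only if $A(V)$ is an $\cM$-subspace of $g$. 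As $A$ is invertible this is a bijection between the $\cM$-subspaces of $h$ and of $g$; in particular $h$ also has a unique $(n/2)$-dimensional $\cM$-subspace, namely $A^{-1}(U)$. Hence $g$ and $h$ share a common $(n/2)$-dimensional $\cM$-subspace if and only if $U=A^{-1}(U)$, that is, if and only if $A(U)=U$.

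It remains to show that the hypotheses on $A$ force $A(U)\neq U$, and this is where the invertibility of $I+A$ enters; I expect it to be the main obstacle. Writing $A$ in block form $A=\left(\begin{smallmatrix}B&C\\D&E\end{smallmatrix}\right)$ with respect to the decomposition $\F_2^{n/2}\times\F_2^{n/2}$, one has $A(x,0_{n/2})=(Bx,Dx)$, so that $A(U)=U$ holds precisely when the lower-left block satisfies $D=0$ (and then invertibility of $A$ already yields invertibility of $B$ and $E$). Thus the genuine heart of the argument is to verify that the combined hypotheses \textit{$A$ invertible} and \textit{$I+A$ invertible} preclude $D=0$, so that $A(U)\neq U$ and, by the reduction above, $f\notin\cM^{\#}$. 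This implication is the delicate point: one must rule out that $A$ stabilises the coordinate subspace $U$, a property that constrains the block $D$ rather than the fixed-vector set of $A$, so the interplay between the two invertibility conditions has to be exploited carefully here, and not merely through the weaker fact that $I+A$ invertible forbids nonzero fixed vectors of $A$.
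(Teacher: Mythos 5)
Your reduction is sound and in fact cleaner than the paper's own argument: you invoke Theorem~\ref{th: insideMMgh} directly, observe that $V$ is an $\cM$-subspace of $h=g\circ A$ if and only if $A(V)$ is an $\cM$-subspace of $g$, and thereby collapse the whole corollary to the single claim $A(U)\neq U$ for the canonical subspace $U=\F_2^{n/2}\times\{0_{n/2}\}$. The paper proceeds differently: it routes the corollary through Theorem~\ref{th: r-ind(f)}, asserting the disjointness of the $(n/2)$-dimensional $\cM$-subspaces of $g$ and $h$ from the uniqueness hypothesis alone, and it uses the invertibility of $I+A$ only to obtain the \emph{other} hypothesis of that theorem, namely that $D_ag\neq D_ah$ for every nonzero $a$ (via the bentness of $g+h$). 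So your expectation that the invertibility of $I+A$ is the tool that forces $A(U)\neq U$ does not match how the paper actually deploys that hypothesis.

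More importantly, the step you defer to the end --- that $A$ and $I+A$ both invertible preclude $A(U)=U$ --- is not merely delicate but false, so your proof cannot be completed as proposed. Take $A=\left(\begin{smallmatrix}B&0\\0&B\end{smallmatrix}\right)$ in block form with respect to $\F_2^{n/2}\times\F_2^{n/2}$, where $B$ is the companion matrix of $t^{n/2}+t+1$; since this polynomial has no root in $\F_2$, both $B$ and $I+B$ are invertible, hence so are $A$ and $I+A$, and yet $A(U)=U$. Then $h=g\circ A=x\cdot B^{T}\pi(By)$ is again a Maiorana-McFarland function whose unique $(n/2)$-dimensional $\cM$-subspace is $U$ itself, so $g$ and $h$ share $U$ and Theorem~\ref{th: insideMMgh} forces $f\in\cM^{\#}$. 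What is actually needed to close your argument (and, for that matter, to justify the first sentence of the paper's proof, which suffers from the same issue) is the additional assumption that $A$ does not stabilize the unique $\cM$-subspace of $g$, i.e.\ $A(U)\neq U$; this is not implied by the stated invertibility hypotheses, so the gap you flagged is genuine and cannot be filled from the hypotheses as given.
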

	\begin{proof}
		Since $g$ has only one  $(n/2)$-dimensional $\cM$-subspace, we have that $g$ and $h$ have no common 	$(n/2)$-dimensional $\cM$-subspace. Since  $I+A$ is also an invertible matrix on $\F_2^n$, we have $g+h$ is also a bent function, that is,  for any nonzero  vector $a\in \F_2^n$ we have
		$D_ag\neq D_ah.$  From Theorem \ref{th: r-ind(f)}, we have the maximal dimension of $\mathcal{M}$-subspaces of $f$ is $<n/2+1$, thus $f \not \in \cM^{\#}$.
	\end{proof}

\subsection{A special case of relating $g$ and $h$ in a linear manner}
The following result, obtained in \cite{Korsakova2013}, provides two secondary constructions of bent functions in $n+2$ variables from bent functions in $n$ variables.  Notice that a version of the result is also stated as Theorem 45 in \cite{Tokareva}. 

\begin{theo}\label{th: Korsakova} \cite{Korsakova2013}
	Let $g$ be a bent function in $n$ variables. Then, the functions $f$ and $f'$ in $(n+2)$-variables defined by
	\begin{equation}\label{eq: Korsakova}
		\begin{split}
		 f(z,z_{n+1},z_{n+2}) =& g(z) + \sum_{i=1}^n \alpha_i z_iz_{n+1}+z_{n+1}z_{n+2}, \\
		 f'(z,z_{n+1},z_{n+2}) =& g(z) + \sum_{i=1}^n \alpha_i z_i(z_{n+1}+z_{n+2})\\+&z_{n+1}z_{n+2},
		\end{split}
	\end{equation}
	for all $z \in \F_2^n$ and $z_{n+1},z_{n+2} \in \F_2$, are bent functions for all $\alpha_1, \ldots , \alpha_n \in \F_2$.
\end{theo}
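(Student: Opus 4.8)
The plan is to verify bentness of each function directly through its Walsh--Hadamard transform, showing that $|W_f(\omega)| = |W_{f'}(\omega)| = 2^{(n+2)/2}$ for every $\omega \in \F_2^{n+2}$. I write $\omega = (w, w_{n+1}, w_{n+2})$ with $w \in \F_2^n$, and set $\alpha = (\alpha_1, \ldots, \alpha_n)$ so that $\sum_{i=1}^n \alpha_i z_i = \alpha \cdot z$. In both cases the idea is to perform the summation over the two appended coordinates $z_{n+1}, z_{n+2}$ first; this collapses the $(n+2)$-dimensional character sum into a single evaluation of $W_g$ at a point depending on $\omega$ and $\alpha$, after which bentness of $g$ finishes the argument.

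For $f$ I would sum over $z_{n+2}$ first. The only terms carrying $z_{n+2}$ are $(z_{n+1} + w_{n+2})\,z_{n+2}$, so the inner sum equals $2$ when $z_{n+1} = w_{n+2}$ and $0$ otherwise. Substituting $z_{n+1} = w_{n+2}$ and absorbing the linear contribution $(\alpha \cdot z)\,w_{n+2} = (w_{n+2}\alpha) \cdot z$ yields
\[
W_f(\omega) = 2\,(-1)^{w_{n+1} w_{n+2}}\, W_g\!\left(w + w_{n+2}\,\alpha\right),
\]
whose absolute value is $2 \cdot 2^{n/2} = 2^{(n+2)/2}$ because $g$ is bent.

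For $f'$ both appended variables couple to $\alpha \cdot z$, so no single summation decouples them; instead I would fix $z$, put $c = \alpha \cdot z \in \F_2$, and evaluate the $2 \times 2$ character sum over $(z_{n+1}, z_{n+2})$ explicitly. Writing $p = c + w_{n+1}$ and $q = c + w_{n+2}$, this sum equals $2(-1)^{pq}$ (which is nothing more than the bentness of the quadratic $uv$ in two variables), and expanding $pq = c(1 + w_{n+1} + w_{n+2}) + w_{n+1} w_{n+2}$ over $\F_2$ gives
\[
W_{f'}(\omega) = 2\,(-1)^{w_{n+1} w_{n+2}}\, W_g\!\left(w + (1 + w_{n+1} + w_{n+2})\,\alpha\right),
\]
again of absolute value $2^{(n+2)/2}$.

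I expect the $f'$ case to be the only mildly delicate step, since the coupling of both binary coordinates to $\alpha \cdot z$ prevents a one-shot reduction; the crux is evaluating the two-variable sum to $2(-1)^{pq}$ and simplifying $pq$ modulo $2$. As an alternative that fits the machinery of this paper, both $f$ and $f'$ can be recognized as $4$-concatenations $f_1 \vert\vert f_2 \vert\vert f_3 \vert\vert f_4$ whose components are $g$ plus affine terms — hence all bent — and a short computation of the duals (using that adding $\alpha \cdot z$ translates the dual by $\alpha$, while adding the constant $1$ complements it) gives $f_1^* + f_2^* + f_3^* + f_4^* = 1$ in both cases, so bentness also follows at once from the dual-sum criterion recalled in Section~\ref{sec:design}.
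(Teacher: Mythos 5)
Your proof is correct. Note that the paper itself supplies no proof of this statement --- it is quoted from \cite{Korsakova2013} (see also Theorem~45 in \cite{Tokareva}) --- so there is no in-paper argument to compare against; what can be checked is that both of your routes are sound. The direct Walsh-transform computation works: for $f$ the inner sum over $z_{n+2}$ indeed forces $z_{n+1}=w_{n+2}$ and yields $W_f(\omega)=2(-1)^{w_{n+1}w_{n+2}}W_g(w+w_{n+2}\alpha)$, and for $f'$ the two-variable sum $\sum_{u,v}(-1)^{uv+pu+qv}=2(-1)^{pq}$ together with $c^2=c$ gives $W_{f'}(\omega)=2(-1)^{w_{n+1}w_{n+2}}W_g\bigl(w+(1+w_{n+1}+w_{n+2})\alpha\bigr)$; both have absolute value $2^{(n+2)/2}$. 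Your alternative argument is the one most consonant with the paper's machinery: restricting to the four cosets gives $f=g\,\vert\vert\,(g+\alpha\cdot z)\,\vert\vert\,g\,\vert\vert\,(g+\alpha\cdot z+1)$ and $f'=g\,\vert\vert\,(g+\alpha\cdot z)\,\vert\vert\,(g+\alpha\cdot z)\,\vert\vert\,(g+1)$, and since adding $\alpha\cdot z$ translates the dual by $\alpha$ while adding $1$ complements it, the dual sum is $1$ in both cases, so bentness follows from the criterion of \cite{SHCF} recalled in Section~\ref{sec:design}; this is exactly the decomposition the paper exploits in Proposition~\ref{prop:usingkorsakova}. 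The direct computation buys you independence from the dual-sum criterion (and explicit formulas for the Walsh spectrum of $f$ and $f'$), while the concatenation view ties the statement into the framework used for the class-inclusion analysis that follows.
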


Nevertheless, these methods fall under the concatenation framework given by $f=g||h||g||(h+1)$ and the class inclusion of these functions is then easily determined.

\begin{prop}\label{prop:usingkorsakova}
	Let $g$ be a bent function in $n$ variables. Let $f$ and $f'$ be the bent functions in $(n+2)$-variables defined in Eq.~\eqref{eq: Korsakova}. Then, the functions $f$ and $f'$ are in the $\cM^{\#}$ class if and only if the function $g$ is in the $\cM^{\#}$ class.
\end{prop}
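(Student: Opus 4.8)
The plan is to recognize both $f$ and $f'$ as bent concatenations of the form $g||h||g||(h+1)$ (up to affine equivalence) and then invoke Theorem~\ref{th: insideMMgh}. First I would restrict $f$ to the four affine subspaces fixed by the last two coordinates. Writing $\ell(z)=\sum_{i=1}^n\alpha_i z_i$, one reads off $f(z,0,0)=g(z)$, $f(z,1,0)=g(z)+\ell(z)$, $f(z,0,1)=g(z)$ and $f(z,1,1)=g(z)+\ell(z)+1$. Hence $f=f_1||f_2||f_3||f_4$ with $f_1=f_3=g$ and $f_2=f_4+1=h$, where $h=g+\ell$. This is exactly the setting of Theorem~\ref{th: insideMMgh}, so $f$ is bent and $f\in\cM^\#$ if and only if $g$ and $h$ share a common $(n/2)$-dimensional $\cM$-subspace.

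The key observation is that adding an affine function leaves $\cM$-subspaces unchanged: since $\ell$ is affine, $D_aD_b\ell=0$ for all $a,b\in\F_2^n$, whence $D_aD_b h=D_aD_b(g+\ell)=D_aD_b g$. Thus $g$ and $h=g+\ell$ have precisely the same $\cM$-subspaces, and the condition ``$g$ and $h$ share a common $(n/2)$-dimensional $\cM$-subspace'' collapses to ``$g$ admits an $(n/2)$-dimensional $\cM$-subspace''. As $g$ is bent, by Dillon's characterization the latter is equivalent to $g\in\cM^\#$. Combining this with Theorem~\ref{th: insideMMgh} yields $f\in\cM^\#\iff g\in\cM^\#$.

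It remains to treat $f'$. Rather than recompute its restrictions (which produce the order $g||(g+\ell)||(g+\ell)||(g+1)$, not directly of the required template), I would exhibit an explicit extended-affine equivalence to $f$. A direct substitution on the last two coordinates gives
\begin{equation*}
f\bigl(z,\,z_{n+1}+z_{n+2},\,z_{n+2}\bigr)=f'(z,z_{n+1},z_{n+2})+z_{n+2},
\end{equation*}
that is, $f'$ is obtained from $f$ by the invertible linear change $(z_{n+1},z_{n+2})\mapsto(z_{n+1}+z_{n+2},z_{n+2})$ together with the addition of the linear term $z_{n+2}$. Since EA-equivalence preserves membership in $\cM^\#$, we conclude $f'\in\cM^\#\iff f\in\cM^\#\iff g\in\cM^\#$.

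The only genuinely delicate step is this last reduction for $f'$: because its natural concatenation order is $g||(g+\ell)||(g+\ell)||(g+1)$, Theorem~\ref{th: insideMMgh} does not apply verbatim, and one must first normalize $f'$ to the $g||h||g||(h+1)$ shape---most cleanly via the coordinate change above. Everything else reduces to the routine derivative identity $D_aD_b\ell=0$ and the affine-invariance of $\cM$-subspaces, so no further computation is needed.
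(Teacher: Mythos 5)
Your proof is correct and follows essentially the same route as the paper: read off the restrictions to see $f=g||h||g||(h+1)$ with $h=g+\ell$, note that adding an affine function does not change $\cM$-subspaces, and apply Theorem~\ref{th: insideMMgh} together with the EA-equivalence of $f$ and $f'$. In fact you go slightly further than the paper by exhibiting the explicit coordinate change $(z_{n+1},z_{n+2})\mapsto(z_{n+1}+z_{n+2},z_{n+2})$ realizing that equivalence, which the paper only asserts.
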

\begin{proof}
	Note that $f$ and $f'$ are extended affine equivalent, hence it is enough to investigate the class inclusion for one of them; therefore, we will prove the result for $f$. By looking at $f(z,0,0)$, $f(z,1,0)$, $f(z,0,1)$ and $f(z,1,1)$, we see that $f=f_1 \vert \vert f_2 \vert \vert f_3 \vert \vert f_4$, where $f_1(z)=f_3(z)=g(z)$, $f_2(z)=f_4(z)+1=g(z)+ \sum_{i=1}^{n} \alpha_iz_i$. Since the functions $g(z)$ and $g(z)+ \sum_{i=1}^{n} \alpha_iz_i$ have the same $\cM$-subspaces, the result follows from Theorem \ref{th: insideMMgh}.
\end{proof}
Consequently, we provide an alternative proof of existence of cubic bents functions outside $\mathcal{M}^\#$ on $\F_2^n$ for all $n\ge 10$.
	\begin{cor}
		Cubic bent functions on $\F_2^n$ outside $\mathcal{M}^\#$ exist for all $n\ge 10$.
	\end{cor}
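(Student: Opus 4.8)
The plan is to construct an explicit cubic bent function outside $\cM^\#$ in $n+2$ variables starting from one in $n$ variables, using Proposition~\ref{prop:usingkorsakova} as the engine. The key observation is that the construction $f(z,z_{n+1},z_{n+2}) = g(z) + \sum_{i=1}^n \alpha_i z_i z_{n+1} + z_{n+1}z_{n+2}$ raises the number of variables by two while keeping the algebraic degree equal to $\max(\deg g, 2)$, provided the added monomials do not increase the degree. Since the Korsakova terms are all of degree at most $2$, if $g$ is cubic then $f$ is cubic as well. Thus I would proceed by induction on $n$ in steps of two, with base case $n=10$.

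First I would establish the base case: I need to exhibit a single cubic bent function on $\F_2^{10}$ that lies outside $\cM^\#$. Such functions are known to exist in the literature (they can be obtained, for instance, from the known classifications of bent functions in $10$ variables, or from earlier constructions of bent functions outside $\cM^\#$); I would cite the appropriate existence result. Then, given a cubic bent $g \in \mathcal{B}_n$ outside $\cM^\#$ for some even $n \geq 10$, I would apply Proposition~\ref{prop:usingkorsakova} with the chosen $\alpha_i$'s to obtain $f \in \mathcal{B}_{n+2}$. By Proposition~\ref{prop:usingkorsakova}, since $g \notin \cM^\#$, the function $f$ is also outside $\cM^\#$. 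It only remains to check that $f$ remains cubic: the summands $\sum_i \alpha_i z_i z_{n+1}$ and $z_{n+1}z_{n+2}$ have degree at most $2$, so $\deg(f) = \max(\deg(g), 2) = 3$, and $f$ is bent by Theorem~\ref{th: Korsakova}. This completes the inductive step, covering all even $n \geq 10$.

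The main obstacle, and the only genuinely nontrivial ingredient, is securing the base case at $n=10$: one needs a concrete cubic bent function on ten variables provably outside $\cM^\#$. The inductive machinery is then entirely routine, since degree-preservation is immediate and the class-inclusion behavior is handed to us by Proposition~\ref{prop:usingkorsakova}. I expect the cleanest writeup to simply invoke a known $10$-variable example from the cited references and then present the induction in one or two sentences, emphasizing that Proposition~\ref{prop:usingkorsakova} guarantees that the outside-$\cM^\#$ property propagates upward while the Korsakova construction preserves the cubic degree.
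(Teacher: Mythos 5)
Your proposal is correct and follows essentially the same route as the paper: the paper's one-line proof likewise applies Theorem~\ref{th: Korsakova} together with Proposition~\ref{prop:usingkorsakova}, seeding the iteration with a concrete cubic bent function in $10$ variables outside $\cM^\#$ (namely $h^{10}_3$ or $h^{10}_4$ from Table~4 of \cite{Polujan2020}). The only difference is that you spell out the degree-preservation and induction explicitly, which the paper leaves implicit.
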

	\begin{proof}
		In Theorem~\ref{th: Korsakova}, take $g\in\mathcal{B}_{10}$ as $h^{10}_3$ or $h^{10}_4$ from~\cite[Table 4]{Polujan2020}, which are both outside $\mathcal{M}^\#$.
\end{proof}
\subsection{Applying suitable affine transforms}
The class inclusion properties are substantially affected by applying suitable affine transformations to bent functions used in 4-bent concatenation.
\begin{theo}\label{theo: affinetransform} 
	Let $g \in \mathcal{B}_n$ be a bent function, $n\geq 6$, in the $\cM^{\#}$ class, and let $q \in \mathcal{B}_n$ be a bent function with a unique $n/2$-dimensional $\cM$-subspace. Then, there exist two linear permutations $A$ and $B$ of $\F_2^n$ such that for $h=q \circ A$ and $h'= q \circ B$ in  $\mathcal{B}_n$, the function $f \in \mathcal{B}_{n+2}$ defined by
	$
	f=g \vert \vert h \vert \vert g \vert \vert (h+1)
	$
	is a bent function inside the $\cM^{\#}$ class, and the function $f' \in \mathcal{B}_{n+2}$ defined by
	$
	f'=g \vert \vert h' \vert \vert g \vert \vert (h'+1)
	$
	is a bent function outside the $\cM^{\#}$ class. 
\end{theo}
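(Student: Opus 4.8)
(Plan)
The plan is to reduce the whole statement to Theorem~\ref{th: insideMMgh} by exploiting the fact that $\mathcal{M}$-subspaces transform covariantly under linear permutations. First I would record the elementary identity, valid for any linear permutation $L$ of $\F_2^n$ and any $q\in\mathcal{B}_n$,
\begin{equation*}
	D_aD_b(q\circ L)(x)=(D_{La}D_{Lb}q)(Lx),\quad a,b,x\in\F_2^n,
\end{equation*}
obtained by expanding both derivatives. It shows that a subspace $V$ is an $\mathcal{M}$-subspace of $q\circ L$ if and only if $L(V)$ is an $\mathcal{M}$-subspace of $q$; hence, writing $U$ for the unique $n/2$-dimensional $\mathcal{M}$-subspace of $q$, the space $L^{-1}(U)$ is the unique $n/2$-dimensional $\mathcal{M}$-subspace of $q\circ L$. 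Since $h=q\circ A$ and $h'=q\circ B$ are bent (EA-equivalence preserves bentness), Theorem~\ref{th: insideMMgh} applies to both $f=g\vert\vert h\vert\vert g\vert\vert(h+1)$ and $f'=g\vert\vert h'\vert\vert g\vert\vert(h'+1)$, and the task becomes: choose $A$ so that $A^{-1}(U)$ is an $\mathcal{M}$-subspace of $g$, and choose $B$ so that $B^{-1}(U)$ is not.

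For the inside case, I would use $g\in\cM^\#$ to fix an $n/2$-dimensional $\mathcal{M}$-subspace $W$ of $g$, which exists by Dillon's characterization. As $GL(n,\F_2)$ acts transitively on the $n/2$-dimensional subspaces of $\F_2^n$, there is a linear permutation $A$ with $A(W)=U$, i.e.\ $A^{-1}(U)=W$. Then $W$ is a common $n/2$-dimensional $\mathcal{M}$-subspace of $g$ and $h$, so Theorem~\ref{th: insideMMgh} gives $f\in\cM^\#$.

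For the outside case, the extra ingredient I would establish is the existence of an $n/2$-dimensional subspace that is not an $\mathcal{M}$-subspace of $g$. Since $g$ is bent it is not affine, so $D_aD_bg\neq 0$ for some linearly independent $a,b\in\F_2^n$; as $n/2\geq 2$, any $n/2$-dimensional subspace $W'$ containing $a$ and $b$ then fails to be an $\mathcal{M}$-subspace of $g$. Using transitivity again, I would pick a linear permutation $B$ with $B(W')=U$, so that $B^{-1}(U)=W'$ is the unique $n/2$-dimensional $\mathcal{M}$-subspace of $h'$. Since $W'$ is not an $\mathcal{M}$-subspace of $g$, the functions $g$ and $h'$ share no common $n/2$-dimensional $\mathcal{M}$-subspace, and Theorem~\ref{th: insideMMgh} places $f'$ outside $\cM^\#$.

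The construction itself is short; the points deserving the most care are the covariance identity above and, crucially, the role played by the uniqueness of $U$. Uniqueness is what makes the outside case clean: because $q\circ B$ has only the single candidate $B^{-1}(U)$ for an $n/2$-dimensional $\mathcal{M}$-subspace, it suffices to steer that one subspace away from the finite family of $\mathcal{M}$-subspaces of $g$, instead of having to avoid all of them simultaneously. I expect the main subtlety to be merely verifying that a non-$\mathcal{M}$-subspace of dimension $n/2$ exists for an arbitrary bent $g\in\cM^\#$, which reduces to $g$ having algebraic degree at least two.
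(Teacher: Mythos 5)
Your proposal is correct and follows essentially the same route as the paper's proof: both use the transitivity of $GL(n,\F_2)$ on $n/2$-dimensional subspaces to align (resp.\ mis-align) the unique $\cM$-subspace of $q$ with an $\cM$-subspace of $g$ (resp.\ with a subspace containing a pair $a,b$ with $D_aD_bg\neq 0$), and then invoke Theorem~\ref{th: insideMMgh}. The only difference is cosmetic: you make explicit the covariance identity for derivatives under linear permutations, which the paper uses implicitly.
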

\begin{proof}
	Let $a,b \in \F_2^n$ be two elements such that $D_aD_bg \neq 0$ (we know such two elements exist, otherwise $g$ would be affine), and let $V$ be the unique $(n/2)$-dimensional $\cM$-subspace of $q$.
	Let $B$ be any linear isomorphism such that $ \{a,b\} \subset B^{-1}(V)$. The subspace $B^{-1}(V)$ is the unique $(n/2)$-dimensional $\cM$-subspace of $q \circ B$. Since $\{a,b\} \subset B^{-1}(V)$ and $D_aD_bg \neq 0$, we deduce that $g$ and $q \circ B$ do not share an $(n/2)$-dimensional $\cM$-subspace. Setting $h'=q \circ B$, from Theorem \ref{th: insideMMgh}, we deduce that 
	$	f'=g \vert \vert h' \vert \vert g \vert \vert (h'+1) \not \in \cM^{\#}$.
	Notice that $h'$ also admits a unique $(n/2)$-dimensional $\cM$-subspace as $h$ does.
	
	On the other hand, since $g \in \cM^{\#}$, it has at least one $(n/2)$-dimensional $\cM$-subspace, denote it by $W$. Let $A$ be any linear isomorphism such that $A(W)=V$, and set $h=q \circ A$. Then, $W$ is an $(n/2)$-dimensional $\cM$-subspace of both $g$ and $h$. By Theorem \ref{th: insideMMgh}, we have that 
	$	f=g \vert \vert h \vert \vert g \vert \vert (h+1) \in \cM^{\#}$.
\end{proof}

\section*{Acknowledgment}
The authors would like to thank the anonymous reviewers for their valuable comments, which helped to improve the presentation of the results.

\end{document}